\documentclass[amssymb, aps, pra, 10pt,superscriptaddress]{revtex4-2}




\usepackage{graphicx}
\usepackage{amsthm}
\usepackage{amsmath}
\usepackage{amssymb}
\usepackage{color}
\usepackage{soul}
\usepackage{url}
\usepackage{mathptmx}
\usepackage{times}
\usepackage{txfonts}
\usepackage{graphicx}
\usepackage{dcolumn}
\usepackage{bm}
\usepackage{mathrsfs}
\usepackage{subfigure}
\usepackage{natbib}
\usepackage{ifsym}

\usepackage[usenames,dvipsnames,svgnames]{xcolor}


\begin{document}

\title{Nonlocal coherent states in an infinite array of boson sites}

\author{A P Sowa}
\affiliation{Department of Mathematics and Statistics, University of Saskatchewan, Canada}

\author{J Fransson}
\affiliation{Department of Physics and Astronomy, Uppsala University, Sweden}


\newtheorem{definition}{Definition}
\newtheorem{theorem}{Theorem}
\newtheorem{proposition}{Proposition}
\newtheorem{lemma}{Lemma}
\newtheorem{corollary}{Corollary}
\newtheorem{algorithm}{Algorithm}
\newtheorem{conjecture}{Conjecture}

\begin{abstract}

A regular coherent state (CS) is a special type of quantum state for boson particles placed in a single site. The defining feature of the CS is that it is an eigenmode of the annihilation operator. The construction easily generalizes to the case of a finite number of sites. 
However, the challenge is altogether different when one considers an infinite array of sites. 
In this work we demonstrate a mathematically rigorous construction that resolves the latter case. The resulting nonlocal coherent states (NCS) are simultaneous eigenmodes for all of the infinitely many annihilation operators acting in the infinite array's Fock space. Our construction fundamentally relies on Dirichlet series-based analysis and number theoretic arguments.
\vspace{.3cm}

\noindent KEYWORDS: quantum harmonic analysis, number-theoretic methods in quantum physics
\end{abstract}

\maketitle


  \section{Introduction} The topic of coherent states (CS) has a long and distinguished tradition.  The 1985 compendium, \cite{Klauder}, features a historical introduction as well as 69 reprints, but the summary states that already at that point there were over 1000 publications on the subject. Contemporarily, the `mathscinet' displays a steady stream of mathematics publications involving CS in over half a century, with the total of over 1500 papers. A search on the `arXiv' reveals over 5600 papers that involve the CS or the related squeezed states. The original construction of the CS corresponds to the Heisenberg algebra.  Subsequently, the construction has been extended so that various Lie algebras result in their own genre of CS, e.g. the spin algebras, \cite{Klauder}, or $SL(2,\mathbb{R})$, \cite{Gazeau}, etc., including the case of infinite-dimensional Lie algebras, such as the loop and Kac-Moody algebras, \cite{Antonsen}. However, to our best knowledge, none of the preexisting work refers in any way to the Dirichlet series which, in a way, are the centerpiece of this work.

In this article, we are interested in a generalization of the classical CS to the case of a quantum system consisting of boson particles placed in an infinite array of sites. Such a system is somewhat akin to the well-known case of the ideal boson gas, consisting of an arbitrary number of indistinguishable bosonic particles, \cite{Merkli}. However, since boson sites in an array are distinguishable, a description of such a system requires substantially different mathematical structures. We emphasize that various constructions corresponding, either explicitly or implicitly, to a finite array of sites have been explored in the past. They play an important role in semi-classical analysis, see, e.g., \cite{Combescure-Robert}. However, in all those cases it suffices to consider finite tensor products of the classical coherent states.

  Our strategy to tackle the case of an infinite array of boson sites relies on a known device, first explored in \cite{Spector} and \cite{Bost_Connes}. Namely, we label the sites by primes rather than by consecutive integers. This has the effect of revealing the structure of the Fock space as that of $\ell_2(\mathbb{N})$ with a major role to play by the multiplicative number theory. In particular, some techniques of the analytic number theory become indispensable and, indeed, enable the construction of a rigorous theory. This framework also emphasizes the role of the multiplicative group of positive rationals $\mathbb{Q}_+$, which was investigated in a related context in \cite{Sowa_Fransson}. In Section \ref{section_construction} we construct a type of coherent states that we deem nonlocal, as indeed these states are distributed over the entirety of the infinite array of sites. The nonlocal coherent states (NCS) immediately bring out the role of the Dirichlet series. In particular, we discover a representation of the Fock space that is imprinted on a set of the Dirichlet series, see Subsection \ref{subsection_Dirichlet}. This representation may be viewed as a many-body analog of the Bargmann-Segal representation of the single-site Hilbert space. The main technical result is discussed in Section \ref{section_Cs}, where it is demonstrated that NCS arise from the vacuum state via an application of a suitably defined displacement operator, an effect analogous to that known in the context of classical CS. In Section \ref{section_Hamilt} we demonstrate how the NCS lead to explicit formulas for the expectation of certain types of Hamiltonians. 
  
  Section \ref{section_FT} highlights the role of the Fourier transform, and the Pontryagin duality, specialized to the group of positive rationals $\mathbb{Q}_+$. As an upshot, we obtain a generalized form of the NCS which differs by certain phase factors, parametrized by elements of the dual group of $\mathbb{Q}_+$, i.e., by points in the Cartesian product of infinitely many copies of $U(1)$. This also leads to a characterization of the Fourier transform on $\mathbb{Q}_+$ via the underlying quantum structure. It also sets the stage to introduce the ultimate form of the NCS, which enables the resolution of the identity, formula (\ref{res_identity_z}). This in turn enables a full resolution of the eigenvalues of the interaction Hamiltonian for a finite array. 
  In the closing sections, we present a sample of physical explorations enabled by the apparatus developed in previous sections. Since these inferences pertain to finite arrays of sites, they might have been supported by a simpler mathematical approach. Yet, basing them on a more general foundation, as we have done here, opens a broader view. It invites investigations into the nature of localization; in particular, whether delocalized states can play a meaningful role in a bosonic system's dynamics.

\section{Coherent states supported on an infinite array of boson sites}\label{section_construction}

We label the consecutive boson sites by primes $p = 2, 3, 5, \ldots$. 
Each site is associated with a bosonic annihilation operator $\hat{a}_{p}$ and the corresponding creation operator $ \hat{a}_{p}^\dagger$, which act as unbounded operators in the bosonic Fock space $\ell_2(\mathbb{N}) = \mbox{ span } \{ |k\rangle : k = 1,2,3, \ldots\}$. They satisfy the bosonic Canonical Commutation Relations, i.e.
\begin{equation}\label{BCCR}
[\,  \hat{a}_p, \hat{a}_q^\dagger \, ] = \delta_{p,q}, \quad  [\,  \hat{a}_p, \hat{a}_q \, ] = 0, \quad  [\,  \hat{a}_p^\dagger, \hat{a}_q^\dagger \, ] = 0\quad \mbox{ for all primes } p,q.
\end{equation}
We give an explicit definition of these operators. First, every integer is uniquely represented via
\[
k = \prod\limits_p p^{a_p(k)} \quad (\mbox{ the product is over distinct primes }).
\]
This defines the multiplicities $a_p(k)$. We set:
\begin{equation}\label{as}
    \hat{a}_p \, | k \rangle = \sqrt{a_p(k)}\,\, |k/p\rangle,  \quad
     \hat{a}_p^\dagger \, |k\rangle  = \sqrt{a_p(k)+1}\,\, | kp\rangle.
\end{equation}
Note that, in particular, if $p$ does not divide $k$, then $\hat{a}_p \, | k \rangle  =0$. Identity (\ref{BCCR}) is then verified by a straightforward calculation.
It is also convenient to define operators $\hat{a}_n$ for all $n\in \mathbb{N}$. Namely, we set:
\begin{equation}\label{aks}
 \hat{a}_1 = 1,\mbox{ and }\, \hat{a}_n = \prod\limits_p (\hat{a}_p)^{a_p(n)} \, \mbox{ for all }  n >1.
\end{equation}
Since all $\hat{a}_p$ commute, $\hat{a}_n$ are well defined. The following identity is an immediate consequence of the definition
\begin{equation}\label{akan}
 \hat{a}_m \hat{a}_n = \hat{a}_{mn} \, \mbox{ for all }  m,n.
\end{equation}
Next, we define a certain sequence, which will play central role in many calculations, namely:
\begin{equation}\label{def_xk}
  x_k := \sqrt{\prod\limits_p a_p(k)! }\quad k \in \mathbb{N}.
\end{equation}
Note that $k\mapsto x_k$ is a multiplicative function, i.e., $x_{m\cdot n} = x_m \cdot x_n $ whenever $m$ and $n$ are relatively prime.

The following fact is easily verified by a direct calculation:
\begin{equation}\label{an_onk}
  \hat{a}_n \, |k\rangle = \frac{x_k}{x_{k/n}} \,\, | k/n\rangle,
\end{equation}
with the convention that $| k/n\rangle = 0$ if $k$ is not divisible by $n$.
We also note that
\begin{equation}\label{an_hat_onk}
  \hat{a}_n^\dagger \, |k\rangle = \frac{x_{k n}}{x_{k}} \,\, | k\, n\rangle.
\end{equation}
Clearly, all operators $\hat{a}_n$ and $\hat{a}_n^\dagger$ are unbounded. Formulas (\ref{an_onk}) and (\ref{an_hat_onk}) imply that they are well defined on finite combinations of basis vectors, i.e. densely defined. It is worth emphasizing that the naive analogue of (\ref{BCCR}) does not hold true, i.e., the commutator $[  \hat{a}_m , \hat{a}_n^\dagger ]$ cannot be expected to equal $\delta_{m,n}$ in general.

We point out that the domains of operators $\hat{a}_p$ (resp. $\hat{a}_p^\dagger$), denoted $D(\hat{a}_p)$ (resp. $D(\hat{a}_p^\dagger)$), vary with $p$. Namely,
\begin{equation}\label{domains}
  D(\hat{a}_p) = D(\hat{a}_p^\dagger) = \{ v= \sum_k v_k\, |k\rangle\in \ell_2(\mathbb{N}):\, \sum_{k} \, |v_k|^2 a_p(k) < \infty \}.
\end{equation}
Note that $D(\hat{a}_p)$ is dense in $\ell_2(\mathbb{N})$.
It is seen directly that $\hat{a}_p^\dagger$ is indeed the adjoint of $\hat{a}_p$, i.e.
\begin{equation}\label{adjoints}
  \langle \, w\, | \,\hat{a}_p v \, \rangle = \langle \,\hat{a}_p^\dagger w\, | \, v \, \rangle \quad \mbox{ for all }\,\, v, w\in  D(\hat{a}_p) = D(\hat{a}_p^\dagger).
\end{equation}
Since $D(\hat{a}_p^\dagger) $ is dense, the operator $\hat{a}_p$ is closeable. Symmetric statements are true about $\hat{a}_p^\dagger$.

One of the fundamental operators that will be frequently at use is the number operator. The number operator at site $p$ is defined as $\hat{N}_p = \hat{a}_p^\dagger \hat{a}_p$. One readily obtains
\begin{equation}\label{N_p}
 \hat{N}_p \, | k\rangle = a_p(k)\, | k\rangle .
\end{equation}
The total number operator is defined as
$
 \hat{N} = \sum_{p} \hat{N}_p,
$
so that
\begin{equation}\label{N_in_canonical}
    \hat{N}\, | k\rangle   = \Omega(k)\, | k\rangle , \quad \mbox{ where }\quad \Omega(k):=  \sum_{p} a_p(k).
\end{equation}
Note that $\Omega(k)$ is the classical big prime-omega function.

It is essential to introduce finite-particle spaces. Namely, the zero-particle space is defined via $\mathbb{H}^{\odot 0} = \mathbb{C}$. Furthermore, for any integer $k\geq1$ the $k$-particle space is defined via
\begin{equation}\label{k-space}
\mathbb{H}^{\odot k} = \mbox{ span }\{\, |n\rangle : \Omega(n) = k \} = \{f : \hat{N} [f] = k\, f \},
\end{equation}
This gives $\ell_2(\mathbb{N})$ its structure as the Fock space:
\begin{equation}\label{ell2_hierarchy}
  \ell_2(\mathbb{N}) \equiv \bigoplus\limits_{k=0}^\infty \mathbb{H}^{\odot k}.
\end{equation}
Indeed, this is the Fock space of an array of boson sites, wherein the sites have been labeled by primes. 

Furthermore, let $\Pi_n: \ell_2(\mathbb{N})\rightarrow \mathbb{H}^{\odot n}$ be the orthogonal projection onto $\mathbb{H}^{\odot n}$. The domain of the number operator is characterized as follows
\[
D(\hat{N}) = \{ v\in   \ell_2(\mathbb{N}) : \, \sum_{n} n^2 \|\Pi_n v\|^2 < \infty \}.
\]
Note that $D(\hat{N})$ is a dense linear subset of $ \ell_2(\mathbb{N})$.

\subsection{Some number-theoretic preliminaries}

Everywhere below, $s = \sigma + it$ denotes a complex parameter with real part $\sigma$ and imaginary part $t$.
Below, we will make frequent use of the prime zeta function, defined as
\begin{equation}\label{prime_zeta}
  P_1 (s) = \sum_{p}  p^{-s}, \quad  \sigma > 1,
\end{equation}
where the sum is over all primes. It is well known that the series defining $P_1$ converges for $\sigma >1$, and diverges for $\sigma \leq 1$. We will also make use of generalizations of this function
\begin{equation}\label{n_prime_zeta}
  P_n(s) = \sum_{k: \Omega(k) = n}  k^{-s}, \quad  \sigma > 1,
\end{equation}
A special role is played by $P_2(s)$. It is useful to observe that
\begin{equation}\label{P_2_to_P_1}
  P_2(s) = \frac{1}{2}\left( P_1(s)^2 + P_1(2s) \right).
\end{equation}
Also, a special role is played by the series $\sum_k \frac{k^{-2\sigma}}{x_k^2}$, where $x_k$ are as in (\ref{def_xk}). Note that it converges at least as fast as $\sum_k k^{-2\sigma}$, and the latter converges whenever $\sigma > 1/2$.
The sum admits a more fundamental characterization via the prime zeta function, namely
\begin{equation}\label{the_N}
  \sum_k \frac{k^{-2\sigma}}{x_k^2} = \prod_p (1+ p^{-2\sigma} + \frac{1}{2!} p^{-2\sigma} +\ldots ) = \prod_{p} \exp p^{-2\sigma} = e^{P_1(2\sigma)}.
\end{equation}
It follows that
\begin{equation}\label{the_Pn_prep}
  \sum_{k: \Omega(k) = n} \frac{k^{-2\sigma}}{x_k^2} = \frac{1}{n!} P_1(2\sigma)^n.
\end{equation}
These facts will underlie some of the analysis of the NCS conducted below.
\vspace{.5cm}

\noindent
\emph{Remark:} In the context of this work, $P_1(s)$ is only relevant for $\Re s > 1$, where the defining series converges. Some properties of $P_1(s)$ are discussed in \cite{Froberg}. The author also considers the Dirichlet series $(1-P_1(s))^{-1} = \sum_{k} A_k k^{-s}$. It is demonstrated that $A_k = \Omega(k)!/x_k^2$, where $x_k$ is as in (\ref{def_xk}). The article also contains interesting insights into the growth rate of $k \mapsto A_k$.

\subsection{ The first definition of the NCS}

For any $s$ with $\sigma > 1/2$, we consider a superposition of basis states, given by:
\begin{equation}\label{the_s}
\sum_k \frac{k^{-s}}{x_k} \, |k\rangle,\quad \mbox{ where } x_k \mbox{ are as in (\ref{def_xk}). }
\end{equation}
We define the  nonlocal coherent state $|s\rangle$ by normalizing the superposition, i.e., in light of (\ref{the_N}), we set
\begin{equation}\label{the_s_all}
  |s\rangle = e^{-P_1(2\sigma)/2}\, \sum_k \frac{k^{-s}}{x_k}\, |k\rangle.
\end{equation}
This ensures that $\| \, |s\rangle \,\| = \sqrt{\langle s | s\rangle} = 1$. Note that those states $|k\rangle$ which have a lot of weight at one site $p$ are relatively suppressed in the superposition since that results in a large $x_k$. Also, the larger $\sigma$ is, the more weight is put on the initial states $|k\rangle$, and \emph{a fortiori} on the initial sites. When $\sigma \rightarrow \infty$ all the weight is shifted to the vacuum state $|1\rangle$. It is natural to ask for the expectation of operators $\Pi_n$, i.e. the probability of finding an $n$-particle state in $|s\rangle$. Identity (\ref{the_Pn_prep}) implies
\begin{equation}\label{Poisson}
 \langle s \, | \Pi_n |\, s\rangle = e^{-P_1(2\sigma)}\,\frac{1}{n!} P_1(2\sigma)^n.
\end{equation}
This is the Poisson distribution over the variable $n$ with the parameter $P_1(2\sigma) = \langle s \, | \hat{N} |\, s\rangle$, see (\ref{s_particle_number_better}).

Identity (\ref{an_onk}) implies
\begin{equation}\label{calc_n}
\hat{a}_n \,  \sum \frac{k^{-s}}{x_k} \,|k\rangle
  =  \sum \frac{k^{-s}}{x_k} \frac{x_k}{x_{k/n}} \, |\, k/n\rangle
  =  n^{-s} \, \sum \left(\frac{k}{n}\right)^{-s}\frac{1}{x_{k/n}} \, |\, k/n\rangle .
\end{equation}
Therefore,
\begin{equation}\label{s_is_eigen4n}
  \hat{a}_n \, |s \rangle = n^{-s} \, |s\rangle.
\end{equation}
Thus, each nonlocal coherent state $|s\rangle$ is an eigenstate for all annihilation operators simultaneously.
This is the most fundamental property of the NCS and, indeed, justifies the terminology. Also, note that even though operators $\hat{a}_n$ are unbounded, $|\langle s\, |\,  \hat{a}_n \, |s \rangle| < n^{-1/2}< \infty$, so these operators in a way ``mimic" bounded operators when probed on NCS. This type of phenomenon is well known in the context of bases, see e.g. \cite{Halmos}, Chapter 6.
\vspace{.5cm}

\noindent
\emph{Remark 1.} The NCS are Euler-type products of the regular coherent states. Indeed, series (\ref{the_s}) is an infinite symmetric tensor product of the form
\begin{equation}\label{inf_tens_pr}
 \bigodot\limits_{p} \, \sum_{n=1}^{\infty} \, \frac{p^{-ns}}{\sqrt{n!}} \, \, |p^n\rangle .
\end{equation}
Recall, see e.g. \cite{Klauder}, that the regular CS, denoted $|\, z\rangle$ with $z\in \mathbb{C}$, are proportional to
$
\sum_{k=1}^{\infty} \, \frac{z^{k}}{\sqrt{k!}} \, \, | k\rangle,
$
where $| k\rangle$ are the eigenstates of the harmonic oscillator.
Thus, each of the terms in the product (\ref{inf_tens_pr}) is proportional to the classical CS at site $p$ with a suitable substitution $z \mapsto p^{-s}$ and $|n\rangle \mapsto |p^n \rangle$. 
  However, the maps $z \mapsto p^{-s}$ are not one-to-one, as $p^{-s} = p^{- s+ 2 \pi i/\log p}$. Note that the infinite symmetric tensor product can ultimately be replaced by an ordinary product in the generalized Fourier picture, see (\ref{the_s_factorized}).
\vspace{.5cm}

\noindent
\emph{Remark 2.} The NCS are not the only eigenstates for all the annihilation operators. In Section \ref{section_FT} we provide a far reaching generalization of property (\ref{s_is_eigen4n}), namely (\ref{an_on_Uf_s}). As a result, we have a family of eigenstates parametrized by points in the half-plane and, additionally, by points on the Pontryagin dual of the group of positive rationals, which is an infinite-dimensional torus. This also highlights the special role of the unitary operator (\ref{def_U_mu}).  In Section \ref{section_res_I} we generalize the construction even further, obtaining NCS parametrized by infinite sequences of nonzero complex numbers (satisfying a certain technical assumption.) In particular, this leads to the conclusion, see Corollary \ref{cor_spectrum_a_n}, that the spectrum of every operator $\hat{a}_n$ is the entire complex plane.

\subsection{ NCS are minimal uncertainty states }

We will demonstrate that the NCS are minimal uncertainty states in a certain sense. To this end, define the displacement and momentum operators at site $p$ via
\begin{equation}\label{pos_mom}
  \hat{X}_p = \frac{1}{\sqrt{2}} \, \left( \hat{a}_p + \hat{a}_p^\dagger\right), \quad \hat{P}_p = \frac{1}{\sqrt{2}i} \, \left( \hat{a}_p - \hat{a}_p^\dagger\right).
\end{equation}
It follows that
\begin{equation}\label{pos_mom_com}
  [\,  \hat{X}_p, \hat{P}_q \, ] = i\,\delta_{p,q},  \quad
   [\,  \hat{X}_p, \hat{X}_q \, ] = 0, \quad   [\,  \hat{P}_p, \hat{P}_q \, ] = 0,
\end{equation}
which implies the Heisenberg uncertainty principle $\sigma_{\hat{X}_p}\, \sigma_{\hat{P}_p} \geq 1/2$  relative to any state (with the choice of units that gives $\hslash = 1$). However, one readily verifies  that when the state is $| s\rangle$, then
\[\sigma_{\hat{X}_p}^2 = \langle s \,|\, \hat{X}_p^2 \,|\, s \rangle - \langle s \,|\, \hat{X}_p \,|\, s \rangle^2 = \frac{1}{2} =
\langle s \,|\, \hat{P}_p^2 \,|\, s \rangle - \langle s \,|\, \hat{P}_p \,|\, s \rangle^2 = \sigma_{\hat{P}_p}^2.
\]
This means that all $|s\rangle$ are minimal uncertainty states for all  $p$.

\subsection{The map from the half-plane to the Fock space}

Next, we examine the map $\{s:\, \Re s > 1/2\}\ni s \mapsto |s\rangle \in \mathbb{H}$. 
A direct calculation similar to (\ref{the_N}) gives:
  \begin{equation}\label{prod}
     \langle s_1\, |\, s_2 \rangle = \exp \left(-P_1(2\sigma_1)/2 -P_1(2\sigma_2)/2 + P_1(s_1^* + s_2)\right) .
  \end{equation}
Since
  $
  \|\, |s_1\rangle - |s_2\rangle \,\|^2 = 2 -  2\Re \langle s_1\, |\, s_2 \rangle ,
  $
it follows that the function $s_1 \mapsto  \|\, |s_1\rangle - |s_2\rangle \,\|$ (with a fixed $s_2$) is smooth. In particular,
\begin{equation}\label{top}
   \|\, |s_1\rangle - |s_2\rangle \,\| = O(|s_1 -s_2|)\quad  \mbox{ as } s_1 \rightarrow s_2.
\end{equation}

\subsection{Overcompleteness}

Next, we demonstrate that the set of all NCS is very ample; indeed, a small part of it is already a dense subset of the Hilbert space. Recall, a subset of $\ell_2(\mathbb{N})$ is dense if its closure is all of $\ell_2(\mathbb{N})$, \cite{Kato}.
We have the following:
\begin{theorem}\label{S_dense}
Let  $S\subset \{s: \Re s > 1/2\}$ be an infinite set of points that has an accumulation point. Then the corresponding set of coherent states $\tilde{S} = \{ |s\rangle : s\in S\}$ is a dense subset of $\ell_2(\mathbb{N})$.
\end{theorem}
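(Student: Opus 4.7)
My plan is to show that the orthogonal complement of $\tilde{S}$ inside $\ell_2(\mathbb{N})$ is trivial, which is equivalent to density. So I fix an arbitrary $v = \sum_k v_k \, |k\rangle \in \ell_2(\mathbb{N})$ that is orthogonal to every $|s\rangle$ with $s\in S$, and I aim to conclude $v = 0$. Using the explicit formula (\ref{the_s_all}) for $|s\rangle$ and dropping the nonzero normalization factor $e^{-P_1(2\sigma)/2}$, the orthogonality condition becomes
\begin{equation*}
F(s) \;:=\; \sum_{k=1}^{\infty} \frac{c_k}{k^s} \;=\; 0 \qquad \text{for all } s\in S,
\end{equation*}
where $c_k := \overline{v_k}/x_k$. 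The problem is thus reduced to a statement about a Dirichlet series.

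The next step is to verify that $F$ is a well-defined holomorphic function on the entire open half-plane $\{\Re s > 1/2\}$. Since $x_k \geq 1$, the sequence $(c_k)$ lies in $\ell_2(\mathbb{N})$ with $\sum_k |c_k|^2 \leq \|v\|^2$. Then, for any $\sigma > 1/2$, Cauchy--Schwarz gives
\begin{equation*}
\sum_k |c_k|\, k^{-\sigma} \;\leq\; \Bigl(\sum_k |c_k|^2\Bigr)^{1/2} \Bigl(\sum_k k^{-2\sigma}\Bigr)^{1/2} \;<\;\infty,
\end{equation*}
so $F$ converges absolutely and uniformly on every closed half-plane $\{\Re s \geq \sigma\}$ with $\sigma > 1/2$. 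Standard theory of Dirichlet series then shows $F$ is holomorphic on $\{\Re s > 1/2\}$.

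Now I invoke the identity theorem: since $S$ is an infinite subset of the connected domain $\{\Re s > 1/2\}$ and has an accumulation point inside this domain, the vanishing of $F$ on $S$ forces $F \equiv 0$ throughout the half-plane. Finally, the uniqueness of Dirichlet coefficients (either by sending $\Re s \to \infty$ to peel off $c_1$, then $c_2$, etc., or by applying Perron-type integral formulas) yields $c_k = 0$ for every $k$. Since $x_k \neq 0$, this forces $v_k = 0$ for all $k$, i.e.\ $v = 0$.

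The argument is essentially a clean transfer from Hilbert-space density to the uniqueness theorem for Dirichlet series, and the only subtle point is making sure the growth of the coefficients $c_k = \overline{v_k}/x_k$ is controlled well enough to give holomorphy on the full open half-plane $\{\Re s > 1/2\}$, which is the region where the map $s\mapsto |s\rangle$ itself is defined. The $\ell_2$ bound on $(v_k)$ combined with $x_k \geq 1$ is exactly what is needed, and this is the step I expect to be the main technical point; everything else is a direct application of standard complex-analytic facts.
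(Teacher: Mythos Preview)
Your proof is correct and follows essentially the same route as the paper: reduce density to the triviality of the orthogonal complement, strip off the nonzero normalizing factor to obtain the Dirichlet series $\sum_k (\overline{v_k}/x_k)\,k^{-s}$, use a Cauchy--Schwarz estimate to get absolute and locally uniform convergence (hence analyticity) on $\{\Re s>1/2\}$, apply the identity theorem, and finish with the uniqueness of Dirichlet coefficients. The only cosmetic difference is that the paper bounds the tail $\sum_{k\ge N}$ directly, whereas you first note $x_k\ge 1$ to put $(c_k)$ in $\ell_2$ and then apply Cauchy--Schwarz globally; both yield the same conclusion.
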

\begin{proof}
One needs to demonstrate that if $|v\rangle \in \ell_2(\mathbb{N})$ and $\langle v | s\rangle = 0$ for all $|s\rangle \in \tilde{S}$, then necessarily $|v\rangle = 0$. To this end, to a state $|v\rangle = \sum_{k} v_k \, |k\rangle$ we assign the Dirichlet series $\Phi_v(s)$ by setting
 \begin{equation}\label{v_2_Phi}
 \Phi_v(s) := e^{P_1(2\sigma)/2}\, \langle v | s\rangle = \sum_{k}\, \frac{ v_k^*}{x_k}\, k^{-s}.
 \end{equation}
Observe that the series converges uniformly for all $s$, such that $\Re s \geq \sigma_0 > 1/2$. Indeed, for $N>1$, we have
 \[
 \sum_{k = N}^{\infty}\, \left| \frac{ v_k^*}{x_k}\, k^{-s}\right| \leq
 \sqrt{ \sum_{k = N}^{\infty}\, |v_k|^2}
 \,\,\sqrt{ \sum_{k = N}^{\infty}\,  \frac{ k^{-2\sigma_0}}{x_k^2} } \leq \| v\| \, e^{P_1(2\sigma_0)/2},
 \]
 and both terms in the middle approach zero when $N\rightarrow \infty$. This implies that $\Phi(s)$ is an analytic function in the half plane $\Re s > 1/2$. Since, by assumption, $\Phi(s)=0$ for all $s\in S$, and $S$ has an accumulation point, $\Phi(s)$ must be identically zero in the whole half plane $\Re s > 1/2$. Furthermore, a Dirichlet series that vanishes identically must have all coefficients equal to zero, hence  $v_k=0$ for all $k$, i.e., $|v\rangle =0$. This completes the proof.
 \end{proof}

\noindent
\emph{Remark.} With the simplifying assumption that $S$ is a vertical line, say, $S = \{s: \, \Re s = \alpha> 1/2\}$, an alternative proof of completeness may be derived from the Perron's inversion formula. Indeed, for a non-integer $r>1$, we have
 \begin{equation}\label{Perron}
   \sum_{k< r} \, \frac{ v_k^*}{x_k} = \frac{1}{2\pi i} \, \lim\limits_{T \rightarrow \infty} \int_{\alpha-iT}^{\alpha +iT} \Phi_v(s)\, r^s \,\frac{ds}{s}.
 \end{equation}
  However, $ \Phi_v(s)$ vanishes by assumption. Thus, taking $r = 3/2, 5/2, 7/2, \ldots$, we conclude that all partial sums vanish, which implies $|v\rangle =0$.

 \subsection{The Fock space encoded via the Dirichlet series} \label{subsection_Dirichlet}

The mapping $v \mapsto \Phi_v(s)$ given by (\ref{v_2_Phi}) enables us to give an alternative representation of the Fock space (\ref{ell2_hierarchy}). Namely,
\begin{equation}\label{DirF}
  \bigoplus\limits_{k=0}^\infty \mathbb{H}^{\odot k} = \left\{ \sum_{k} d_k \, k^{-s}: \sum_{k} x_k^2\, |d_k|^2   < \infty\right\},\quad \mbox{ where the weights } x_k \mbox{ are as in (\ref{def_xk})}.
\end{equation}
The corresponding Hermitian product is given by
\[
\left\langle \, \sum_{k} c_k \, k^{-s}\, | \, \sum_{k} d_k \, k^{-s}\, \right\rangle = \sum_{k} x_k^2 \, c_k^* d_k .
\]
Furthermore, the creation and annihilation operators assume the form
\begin{eqnarray}
  \hat{a}_p^\dagger \, \sum_{k} d_k \, k^{-s}  &=& \sum_{k} d_k \, (kp)^{-s} \\
  \nonumber
   && \\
  \hat{a}_p \, \sum_{k} d_k \, k^{-s} &=&  \sum_{k} d_{kp}\, [a_p(k) + 1] \, k^{-s}
\end{eqnarray}
Indeed, these identities are easily established by calculating $\langle \, v\, | \hat{a}\, | s\rangle$, resp. $\langle \, v\, | \hat{a}^\dagger\, | s\rangle$, and interpreting these expressions via $\Phi_{\hat{a}_p^\dagger v}$, resp. $\Phi_{\hat{a}_p v}$. It is also interesting to observe that
\begin{equation}\label{Np_on_Phi}
  \hat{N}_p \, \sum_{k} d_k \, k^{-s} = \, \sum_{k} \, a_p(k) \, d_k \, k^{-s},
\end{equation}
so that
\begin{equation}\label{N_on_Phi}
  \hat{N} \, \sum_{k} d_k \, k^{-s} = \, \sum_{k} \, \Omega(k)\, d_k  \, k^{-s}.
\end{equation}

\section{ NCS are generated by a family of unitary transforms}\label{section_Cs}

We introduce a few specific operators which play an important role in understanding the properties of NCS. The first family of operators is defined as follows:
  \begin{equation}\label{def_Cs}
      C_s = \sum_{p}p^{-s} \hat{a}_p, \quad ( \sigma = \Re s > 1/2 ), \quad \mbox{ with the domain }  D(\hat{N}^{1/2}) = \{ v\in   \ell_2(\mathbb{N}) : \, \sum_{n} n\, \|\Pi_n v\|^2 < \infty \}.
  \end{equation}
  Note that the restriction of $C_s$ to $\mathbb{H}^{\odot 0}$ is the trivial null operator.
  We also define:
\begin{equation}\label{def_Cs_dag}
 C_s^\dagger = \sum_{p}p^{-s^*} \hat{a}_p^\dagger, \quad ( \sigma = \Re s > 1/2 ), \quad \mbox{ with the domain } D(\hat{N}^{1/2}).
\end{equation}
Note that $D(\hat{N}^{1/2})$ is a dense linear subset of $ \ell_2(\mathbb{N})$.
The following theorem highlights the basic properties of these operators.

\begin{theorem}
\label{theorem_C}
 Operators $C_s, C_s^\dagger$ have the following properties:
\begin{enumerate}

  \item
    Let $ C_s^\dagger(n)$  be the restriction of $C_s^\dagger$ to the subspace $\mathbb{H}^{\odot n}$. The range of $  C_s^\dagger(n)$ is a linear subset of  $\mathbb{H}^{\odot n+1}$, and
  \begin{equation}\label{Cdaghier}
    C_s^\dagger(n) : \mathbb{H}^{\odot n}\rightarrow \mathbb{H}^{\odot n+1}
  \end{equation}
 is a bounded operator. Moreover, we have
 \begin{equation}\label{C_norms}
\|\, C_s^\dagger(n)\,\| = \left\{ \begin{array}{rll}
                                     = & \sqrt{n+1}\, \sqrt{P_1(2\sigma)} &\mbox{ for } n = 0,1 \mbox{ and } \sigma =\Re s >1/2 \\
                                     &  &  \\
                                    \leq  &  \sqrt{n \, P_2(\sigma) +P_1(2\sigma)} < \sqrt{n+1}\sqrt{ P_2(\sigma)} & \mbox{ whenever } n\geq 2 \mbox{ and } \sigma =\Re s >1.
                                  \end{array}\right.
 \end{equation}
We do not establish whether the tighter estimate for $n\geq 2$ is sharp.

  \item
   Let $ C_s(n)$  be the restriction of $C_s$ to the subspace $\mathbb{H}^{\odot n+1}$.
   The range of $  C_s(n)$ is a linear subset of $\mathbb{H}^{\odot n}$, and
  \begin{equation}\label{Chier}
    C_s(n) : \mathbb{H}^{\odot n+1}\rightarrow \mathbb{H}^{\odot n}
  \end{equation}
  is the adjoint of $C_s^\dagger(n)$, i.e.
  \begin{equation}\label{adjoint}
     \langle \,  u \, | \, C_s^\dagger(n) \, v\,\rangle =
      \langle \, C_s(n) \, u \, |\, v\,\rangle
      \quad \mbox{ for all }\quad u \in \mathbb{H}^{\odot n+1}, v \in \mathbb{H}^{\odot n}.
  \end{equation}
 In particular, $  C_s(n)$ is a bounded operator with norm $ \|\, C_s(n)\,\| =\|\, C_s^\dagger(n)\,\|$ as in (\ref{C_norms}).

 \item
  Whenever, $\sigma > 1$, operators $ C_s$ and $C_s^\dagger$ are closable. In addition their sum  $ C_s^\dagger + C_s$ and difference $ C_s^\dagger - C_s$ are also closable operators.

\end{enumerate}

\end{theorem}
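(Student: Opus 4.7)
The plan is to reduce everything to a single commutator identity. Expanding $\sum_{p,q} p^{-s} q^{-s^*} [\hat{a}_p, \hat{a}_q^\dagger]$ and applying (\ref{BCCR}) yields $C_s C_s^\dagger - C_s^\dagger C_s = P_1(2\sigma)\, I$, so on the common dense domain $D(\hat{N}^{1/2})$ one has
\[
\|C_s^\dagger v\|^2 \;=\; \|C_s v\|^2 \;+\; P_1(2\sigma)\,\|v\|^2.
\]
Because $\hat{a}_p^\dagger$ raises and $\hat{a}_p$ lowers the particle number by one, the mapping statements (\ref{Cdaghier}) and (\ref{Chier}) follow automatically. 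On $\mathbb{H}^{\odot 0}$, $C_s v = 0$, so the identity immediately yields $\|C_s^\dagger(0)\| = \sqrt{P_1(2\sigma)}$, which is the $n=0$ row of (\ref{C_norms}).

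For $n=1$ I would write $v=\sum_p c_p|p\rangle$; the definitions (\ref{as}) give $C_s v = \bigl(\sum_p p^{-s} c_p\bigr)|1\rangle$, and a single Cauchy--Schwarz yields $\|C_s v\|^2 \le P_1(2\sigma)\|v\|^2$, sharp when $c_p \propto p^{-s^*}$; the displayed identity then upgrades this to $\|C_s^\dagger(1)\|^2 = 2 P_1(2\sigma)$. For $n\ge 2$ I would expand $v=\sum_{\Omega(k)=n} v_k |k\rangle$, reorganize via the substitution $k=mp$ using (\ref{an_onk}), and obtain
\[
C_s v \;=\; \sum_{\Omega(m)=n-1}\Bigl(\sum_p p^{-s}\sqrt{a_p(m)+1}\, v_{mp}\Bigr)|m\rangle.
\]
Cauchy--Schwarz in the prime index, followed by the collapse $\sum_p a_p(k) = \Omega(k) = n$ from (\ref{N_in_canonical}), produces $\|C_s v\|^2 \le n\, P_1(2\sigma)\|v\|^2$. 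To land on the $P_2$-form preferred in (\ref{C_norms}) I would use (\ref{P_2_to_P_1}) together with the elementary inequality $P_1(\sigma)^2 > P_1(2\sigma)$ (which requires $\sigma>1$ just so that $P_1(\sigma)$ converges, and which follows from expanding the square and isolating the diagonal) to conclude $P_1(2\sigma) < P_2(\sigma)$; this yields both $(n+1) P_1(2\sigma) \le n P_2(\sigma) + P_1(2\sigma)$ and the strict comparison $\sqrt{n P_2(\sigma) + P_1(2\sigma)} < \sqrt{n+1}\sqrt{P_2(\sigma)}$.

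Part~2 is then bookkeeping: transferring $\hat{a}_p^\dagger$ across the inner product termwise via (\ref{adjoints}) produces (\ref{adjoint}), and $\|C_s(n)\| = \|C_s^\dagger(n)\|$ is the standard Banach-space adjoint duality. For part~3, the same formal-adjoint relation between $C_s$ and $C_s^\dagger$ on the dense domain $D(\hat{N}^{1/2})$ forces both $(C_s)^*$ and $(C_s^\dagger)^*$ to be densely defined, so both $C_s$ and $C_s^\dagger$ are closable; closability of $C_s^\dagger + C_s$ and $C_s^\dagger - C_s$ then comes for free from the observation that $C_s^\dagger + C_s$ and $i(C_s^\dagger - C_s)$ are symmetric on $D(\hat{N}^{1/2})$ and hence have densely-defined adjoints. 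The only non-routine point is the $n\ge 2$ norm estimate: the direct Cauchy--Schwarz already supplies the tighter bound $\sqrt{(n+1) P_1(2\sigma)}$, so the main obstacle is purely cosmetic --- converting via (\ref{P_2_to_P_1}) to reproduce the $P_2$-form that the theorem chooses to advertise.
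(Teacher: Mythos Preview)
Your argument is correct and takes a genuinely different route from the paper's own proof. The paper estimates $\|C_s^\dagger v\|^2$ directly: for $v\in\mathbb{H}^{\odot n}$ with $n\ge 2$ it expands $\sum_{\Omega(l)=n+1}\bigl|\sum_{p\mid l} v_{l/p}\,p^{-s^*}\sqrt{a_p(l)}\bigr|^2$ into a diagonal part $\alpha$ and an off-diagonal part $\beta$, bounds $\alpha\le (n+1)P_1(2\sigma)\|v\|^2$ and $\beta\le \tfrac{n}{2}\bigl(P_1(\sigma)^2-P_1(2\sigma)\bigr)\|v\|^2$, and then invokes (\ref{P_2_to_P_1}) to combine these into $nP_2(\sigma)+P_1(2\sigma)$. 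The restriction $\sigma>1$ enters through the off-diagonal term, which contains $P_1(\sigma)^2$.

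Your commutator identity $\|C_s^\dagger v\|^2=\|C_s v\|^2+P_1(2\sigma)\|v\|^2$ sidesteps this entirely: you estimate the \emph{annihilation} side, where a single Cauchy--Schwarz plus the regrouping $\sum_{p\mid k} a_p(k)=\Omega(k)=n$ gives $\|C_s v\|^2\le nP_1(2\sigma)\|v\|^2$, hence $\|C_s^\dagger(n)\|^2\le (n+1)P_1(2\sigma)$. This is strictly sharper than the paper's bound (since $P_1(2\sigma)<P_2(\sigma)$) and, more strikingly, it is valid for all $\sigma>1/2$, not just $\sigma>1$. The ``cosmetic'' conversion you flag is therefore a genuine weakening, needed only to match the form of (\ref{C_norms}) as stated. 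For part~3 the paper gives a direct sequential argument via the projections $\Pi_n$ (and then remarks, as you do, that symmetry of $C_s^\dagger\pm C_s$ also suffices); your densely-defined-adjoint argument is the cleaner abstract version of the same idea.
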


\begin{proof}
\begin{enumerate}
  \item
   First, note that
  \[
     C_s^\dagger\, |k\rangle = \sum_{p}  p^{-s^*} \, \sqrt{a_p(k) + 1} \, |kp\rangle ,
  \]
  and
    \[
  \|\, C_s^\dagger \, |k  \rangle \, \|^2 = \sum_{p} \left|\,  p^{-s^*} \, \sqrt{a_p(k) + 1} \,  \right|^2 \leq (\Omega(k) + 1) \, P_1(2\sigma).
  \]
  This shows that (\ref{Cdaghier}) holds, provided $C_s^\dagger$ is bounded on the $n$-particle subspaces.
 Next, let $ v \in \mathbb{H}^{\odot n}$, say, $v = \sum_{k: \Omega(k) = n} v_k\, |k\rangle$. We obtain
\begin{equation}\label{Av}
     C_s^\dagger\, v = \sum_{k: \Omega(k) = n} v_k\,\sum_{p}  p^{-s^*} \, \sqrt{a_p(k) + 1} \, |kp\rangle  \\
     = \sum_{l: \Omega(l) = n+1} \,\sum_{p|l}  v_{l/p} \, p^{-s^*} \, \sqrt{a_p(l)} \, \, |l\rangle,
\end{equation}
where $\sum_{p|l}$ denotes the sum over all primes $p$ that are divisors of $l$.
   Thus,
  \[
  \| \,  C_s^\dagger v \,\|^ 2 = \sum_{l: \Omega(l) = n+1} \,\left|\, \sum_{p|l}  v_{l/p} \, p^{-s^*} \, \sqrt{a_p(l)}\, \right|^2 .
  \]
     To estimate this norm, we proceed in steps, fist observing the outcome for $n=0$ and then $n=1$ before considering the case of $n\geq 2$. When $n=0$, all $l$ with $\Omega(l) = 1$ are primes, so that $a_p(l) \in\{0, 1\} $, and
  \[
   \| \,  C_s^\dagger v \,\|^ 2 = 
  \sum_{p}  |v_1|^2 \, p^{-2\sigma}  = |v_1|^2\, P_1(2\sigma) = \|v\|^2\, P_1(2\sigma).
  \]
 Thus,  $\|C_s^\dagger (0)\| = \sqrt{P_1(2\sigma)} $ as claimed. Next, let $n=1$ so that $l = 4,6,9,10, \ldots$ and $a_p(l)\in\{0,1,2\}$. We reorganize and evaluate $ \sum_{l: \Omega(l) = 2} \,\left(\, \sum_{p|l}  |v_{l/p}|\, \sqrt{a_p(l)} \, p^{-\sigma} \, \right)^2$ as follows:
 \[
 \begin{split}
  & (|v_2| \sqrt{2}\, 2^{-\sigma})^2 + (|v_3| 2^{-\sigma} + |v_2| 3^{-\sigma})^2 + (|v_3| \sqrt{2}\, 3^{-\sigma})^2
    + (|v_5| 2^{-\sigma} + |v_2| 5^{-\sigma})^2+ \ldots
    \\
    & 
    \\
     = &\,\, |v_2|^2 (2^{-2\sigma} +3^{-2\sigma} +5^{-2\sigma} +\ldots )
      \\
   + &  \,\, |v_3|^2 (2^{-2\sigma} +3^{-2\sigma} +5^{-2\sigma} +\ldots )
      \\
    +  & \,\, \ldots
     \\
     + & \,\, |v_2|\, 2^{-\sigma} (|v_2|\, 2^{-\sigma} +|v_3|\, 3^{-\sigma} +|v_5|\, 5^{-\sigma} +\ldots )
     \\
     + & \,\, |v_3|\, 3^{-\sigma} (|v_2|\, 2^{-\sigma} +|v_3|\, 3^{-\sigma} +|v_5|\, 5^{-\sigma} +\ldots )
     \\
    + & \ldots
     \\
     =& \|v\|^2 \, P_1(2\sigma)\,
     +\, \left(|v_2|\, 2^{-\sigma} +|v_3|\, 3^{-\sigma} +|v_5|\, 5^{-\sigma} +\ldots \right)^2 \, \leq \, 2  \|v\|^2 \, P_1(2\sigma) .
 \end{split}
 \]
 Taking $v_p$ such that $|v_p| = p^{-\sigma}$ we see that the estimate cannot be improved.
Therefore,  $\|C_s^\dagger (1)\| = \sqrt{2} \sqrt{P_1(2\sigma)} $  as claimed.

The structure of the proof changes for $n\geq 2$, but it is essentially the same for all these cases. First note
  \[
  \sum_{l: \Omega(l) = n+1} \,\left(\, \sum_{p|l}  |v_{l/p}| \, \sqrt{a_p(l)}\, p^{-\sigma} \, \right)^2
  =   \sum_{l: \Omega(l) = n +1} \,\sum_{p|l}  |v_{l/p}|^2 \, a_p(l)\, p^{-2\sigma}\, + \,
  \sum_{l: \Omega(l) = n +1} \,\sum_{p\neq q; p,q|l}  |v_{l/p} v_{l/q}| \, \sqrt{a_p(l)a_q(l)} \, (pq)^{-\sigma} =: \alpha + \beta.
  \]
  We regroup and estimate the first sum ($\alpha$) as follows:
  \[
  \alpha = \sum_{p} \, p^{-2\sigma}\sum_{k: \Omega(k) = n} \, (a_p(k)+1)  |v_{k}|^2 \, \leq P_1(2\sigma) \, (n+1)\, \|v\|^2.
  \]
  As to the second sum ($\beta$), we first note that if $|v_{l/p} v_{l/q}| \, \sqrt{a_p(l)a_q(l)} \neq 0 $, then $l = p q s$ for an integer $s$ with $\Omega(s) = n-1$, and so $|v_{l/p} v_{l/q}| \, \sqrt{a_p(l)a_q(l)}  = |v_{qs} v_{ps}| \, \sqrt{(a_p(s)+1)(a_q(s)+1)} $. This allows us to rewrite this term in the form
  \[
 \beta =  \sum_{p\neq q} \, (pq)^{-\sigma}  \sum_{s: \Omega(l) = n-1}\, |v_{qs} v_{ps}| \, \sqrt{(a_p(s)+1)(a_q(s)+1)}.
  \]
  Note that $\sqrt{(a_p(s)+1)(a_q(s)+1)} \leq n/2$, whereas $\sum_{s}\, |v_{qs} v_{ps}| \leq \|v\|^2 $ for any pair $p,q$. Therefore,
  \[
  \beta \leq \frac{n}{2} \sum_{p \neq q} \, (pq)^{-\sigma}\, \|v\|^2= \frac{n}{2} \, \left(P_1(\sigma)^2 - P_1(2\sigma)\right)\, \|v\|^2.
  \]
  Note that for convergence it is required that $\sigma > 1$.
In summary, taking into account (\ref{P_2_to_P_1}), we have $\alpha + \beta \leq \left( n \, P_2(\sigma) +P_1(2\sigma) \right)\, \|v\|^2$, which means that for $\sigma > 1$ we have $\| C_s^\dagger(n) \| \leq \sqrt{n \, P_2(\sigma) +P_1(2\sigma)} < \sqrt{n+1}\sqrt{ P_2(\sigma)}$.
 This completes the proof of the first statement.

 \item
 We have
  \[
  C_s\, |k\rangle = \sum_{p|k}p^{-s} \, \sqrt{a_p(k)} \, |\,k/p\rangle,
  \]
which means that (\ref{Chier}) holds provided $C_s(n)$ is bounded. Next, it is seen by inspection that (\ref{adjoint}) holds. This implies that $C_s(n)$ is the adjoint of $C_s^\dagger(n)$ and $\|\, C_s(n)\,\| = \|\, C_s^\dagger(n)\,\|$.  This completes the proof of the second statement.

  \item
 Estimates (\ref{C_norms}) imply that whenever $\sigma >1$, $C_s$ and $C_s^\dagger$ are both well defined in $D(\hat{N}^{1/2})$, which is dense in $\ell_2(\mathbb{N})$. To demonstrate that $C_s^\dagger$ is closable, we need to show that if a sequence $\|v^{m}\| \rightarrow 0$ and $\|C_s^\dagger v^{m} - w\| \rightarrow 0$, then $w=0$, see e.g., Chapter III, \cite{Kato}. Recall that $\Pi_n: \ell_2(\mathbb{N})\rightarrow \mathbb{H}^{\odot n}$ is the orthogonal projection. We have $\|\Pi_n v^{m}\| \rightarrow 0$, and in light of (\ref{Cdaghier}) $\|C_s^\dagger \Pi_n v^{m} \| \rightarrow 0$ for all $n$. Since $C_s^\dagger \Pi_n v^{m} = \Pi_{n+1} C_s^\dagger v^{m}$, this implies $\Pi_{n+1} w = 0$ for all $n\geq 0$ and, of course, $\Pi_0 w = 0$. Therefore $w=0$. The proof that $C_s$ is closable is similar, and based on identity (\ref{Chier}). Also, if a sequence $\|v^{m}\| \rightarrow 0$ and $\|(C_s \pm C_s^\dagger)\, v^{m} - w\| \rightarrow 0$, then $w=0$. Indeed, it suffices to observe that $\Pi_n\,(C_s \pm C_s^\dagger)\, v^{m} = C_s\Pi_{n+1} \, v^m \pm C_s^\dagger \Pi_{n-1}\, v^{m} \rightarrow 0$  and hence $\Pi_n \, w=0$ for all $n$. Thus, operators $ C_s \pm C_s^\dagger$ are closable. This also follows directly from the general fact that a symmetric densely defined operator is closable, see e.g., Chapter V, \cite{Kato}.

\end{enumerate}
\end{proof}
\vspace{.2cm}

In light of Theorem \ref{theorem_C}, both $C_s$ and $C_s^\dagger$ are closeable. Henceforth we will understand  $C_s$ and $C_s^\dagger$ to denote the maximal closed extensions. The theorem also implies that
\[
D(C_s) = \{u: \, \sum\limits_{k=0}^\infty \| C_s \, \Pi_{k+1}\, u \|^2 <\infty\}\quad  \mbox{ and } \quad D(C_s^\dagger) = \{u: \, \sum\limits_{k=0}^\infty \| C_s^\dagger \, \Pi_{k}\, u \|^2 <\infty\}.
\]
Note that  when $\sigma >1$ both these sets are at least as large as $ D(\hat{N}^{1/2})$. Note also that operators $ C_s  + C_s^\dagger$  and $ C_s  - C_s^\dagger$ are defined on $D(C_s) \cap D(C_s^\dagger)$.
The highlight of the next theorem is the fact that, indeed, operators $C_s^\dagger$ and $C_s$ are mutually adjoint, whenever $\sigma >1$. To carry out the proof we will temporarily denote the adjoint of operator $A$ by $A^*$.

\begin{corollary} \label{cor_C_closed}
  Whenever $\sigma >1$ the closed operators $C_s^\dagger$ and $C_s$ are mutually adjoint, i.e., $C_s = \left(C_s^\dagger \right)^*$ and $C_s^* = C_s^\dagger $.   In particular, the closure of operator $ C_s  + C_s^\dagger$ is self-adjoint and the closure of operator  $ C_s  - C_s^\dagger$ is anti self-adjoint.
\end{corollary}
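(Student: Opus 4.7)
The plan is to promote the block-wise pairing (\ref{adjoint}) into a full Hilbert-space adjointness identity between the closed extensions $C_s$ and $C_s^\dagger$, and then derive essential (anti-)self-adjointness of the symmetric and antisymmetric combinations by exploiting the Fock grading and Nelson's analytic vector theorem.

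\emph{Step 1 (showing $C_s \subseteq (C_s^\dagger)^*$).} For $v \in D(C_s)$ and $u \in D(C_s^\dagger)$, I would decompose $u = \sum_n \Pi_n u$ and $v = \sum_n \Pi_n v$ along the Fock grading. Since $C_s^\dagger$ raises the particle number by one, $C_s$ lowers it by one, and (\ref{adjoint}) applies on each block, one obtains
\begin{equation*}
\langle v \,|\, C_s^\dagger u\rangle = \sum_n \langle \Pi_{n+1} v \,|\, C_s^\dagger(n)\, \Pi_n u\rangle = \sum_n \langle C_s(n)\, \Pi_{n+1} v \,|\, \Pi_n u\rangle = \langle C_s v \,|\, u\rangle,
\end{equation*}
with the interchanges of summation and inner product justified by Cauchy--Schwarz together with the square-summability conditions defining $D(C_s)$ and $D(C_s^\dagger)$. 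The symmetric argument yields $C_s^\dagger \subseteq C_s^*$.

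\emph{Step 2 (reverse inclusions).} Let $v \in D((C_s^\dagger)^*)$ and write $w = (C_s^\dagger)^* v$. For each $n$, any $u_0 \in \mathbb{H}^{\odot n}$ lies in $D(C_s^\dagger)$ by boundedness of $C_s^\dagger(n)$, and (\ref{adjoint}) gives
\begin{equation*}
\langle w \,|\, u_0\rangle = \langle v \,|\, C_s^\dagger u_0\rangle = \langle \Pi_{n+1} v \,|\, C_s^\dagger(n)\, u_0\rangle = \langle C_s(n)\, \Pi_{n+1} v \,|\, u_0\rangle,
\end{equation*}
so $\Pi_n w = C_s(n) \Pi_{n+1} v$ for every $n$. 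Then $\|w\|^2 = \sum_n \|C_s(n) \Pi_{n+1} v\|^2 < \infty$, which is precisely the domain condition for $v \in D(C_s)$, with $C_s v = w$. The symmetric argument gives $C_s^* \subseteq C_s^\dagger$, completing both identities.

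\emph{Step 3 (closures of $T_\pm := C_s^\dagger \pm C_s$).} Both $T_\pm$ restrict to the algebraic finite-particle subspace $\mathcal{F} = \bigoplus_n^{\mathrm{alg}} \mathbb{H}^{\odot n}$, which is dense in $\ell_2(\mathbb{N})$, invariant under $C_s$ and $C_s^\dagger$, and on which $T_+$ is symmetric and $T_-$ is antisymmetric (by Steps 1--2). To conclude I would invoke Nelson's analytic vector theorem. Bound (\ref{C_norms}) yields $\|C_s^\dagger(n)\|, \|C_s(n)\| = O(\sqrt{n+1})$; iterating $T_\pm$ $j$ times on a basis vector $|k\rangle \in \mathbb{H}^{\odot n}$, along any of the at most $2^j$ sign patterns, produces a cumulative norm dominated by $C^j \sqrt{(n+j)!/n!}$. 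Stirling's estimate makes $\sum_j t^j \|T_\pm^j |k\rangle\|/j!$ converge for all $t > 0$, so every basis vector is analytic (indeed entire). Nelson's theorem then yields essential self-adjointness of $T_+$ and essential skew-self-adjointness of $T_-$ on $\mathcal{F}$; since any symmetric extension of a self-adjoint operator coincides with it, the closure of $T_\pm$ on the larger domain $D(C_s) \cap D(C_s^\dagger)$ must agree with the closure on $\mathcal{F}$, delivering the stated (anti-)self-adjointness.

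The chief technical point is in Step 3, namely verifying that the branching factor $2^j$ combined with the iterated block norms stays dominated by $j!$ in Nelson's criterion — this hinges precisely on the $\sqrt{n+1}$ growth in (\ref{C_norms}) and fails for any superpolynomial growth. Steps 1 and 2 are essentially careful Fock-space bookkeeping of (\ref{adjoint}) plus routine justification of the summation interchanges; they present no conceptual difficulty.
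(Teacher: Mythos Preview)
Your Steps 1 and 2 are essentially the paper's own argument: the paper likewise uses the Fock-grading decomposition and the block-wise adjoint identity (\ref{adjoint}) to establish first $C_s \subseteq (C_s^\dagger)^*$ and then the reverse inclusion by testing against finite-particle vectors and reading off $\Pi_n w = C_s(n)\Pi_{n+1}v$. Your version of Step 2 (testing against a single $u_0 \in \mathbb{H}^{\odot n}$) is marginally cleaner, but the substance is identical.

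Where your proposal genuinely differs is Step 3. The paper's proof stops after establishing $C_s = (C_s^\dagger)^*$ and $C_s^* = C_s^\dagger$; the ``In particular'' assertion about (anti-)self-adjointness of the closures of $C_s \pm C_s^\dagger$ is stated but not argued. As you implicitly recognize, mutual adjointness of $C_s$ and $C_s^\dagger$ yields only symmetry of $C_s + C_s^\dagger$ on $D(C_s)\cap D(C_s^\dagger)$, not essential self-adjointness; the latter requires an additional ingredient. Your Nelson analytic-vector argument supplies exactly that: the key estimate $\|T_\pm^j\,|k\rangle\| \leq (2K)^j\sqrt{(n+j)!/n!}$ follows cleanly from the $\sqrt{n+1}$-growth in (\ref{C_norms}) together with the observation that $T_\pm^{j-1}|k\rangle$ is supported on particle levels $\leq n+j-1$, and the resulting series $\sum_j t^j\sqrt{(n+j)!/n!}/j!$ converges for all $t$ via $\sqrt{(n+j)!/(n!\,j!)} \leq 2^{(n+j)/2}$. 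The final maximality argument (a self-adjoint operator has no proper symmetric extensions) correctly identifies the closure on $D(C_s)\cap D(C_s^\dagger)$ with the essentially self-adjoint closure on $\mathcal{F}$. So your proof is correct and, on the self-adjointness point, more complete than the paper's.
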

\begin{proof}
  Assuming $\sigma >1$, (\ref{C_norms}) and (\ref{adjoint}) imply that for all $u \in D(C_s), v \in D(C_s^\dagger)$, we have
    \begin{equation}\label{adjoint_inD_of_N_sqrt}
     \langle \,  u \, | \, C_s^\dagger \, v\,\rangle = \sum_{k = 0}^{\infty}  \langle \,  \Pi_k\, u \, | \, C_s^\dagger \, v\,\rangle =  \sum_{k = 1}^{\infty}  \langle \,  C_s \Pi_k\, u \, | \, \Pi_{k-1} \, v\,\rangle =
      \langle \, C_s \, u \, |\, v\,\rangle
           \end{equation}
Thus, $\left(C_s^\dagger \right)^*$ is an extension of $C_s$, i.e., $C_s \subset \left(C_s^\dagger \right)^*$. It remains to show that  $ \left(C_s^\dagger \right)^* \subset C_s$. To this end, let $u \in D\left(\left(C_s^\dagger \right)^*\right)$. We will demonstrate that this implies $u \in D(C_s)$. First, by definition of $D\left(\left(C_s^\dagger \right)^*\right)$, the linear map
\begin{equation}\label{map}
v \mapsto \langle\, u \, | \, C_s^\dagger \, v\,\rangle
\end{equation}
is bounded on $D(C_s^\dagger )$. Thus, for any $v \in D(C_s^\dagger )$, we have
\[
 \langle\, u \, | \, C_s^\dagger \, v\,\rangle = \lim\limits_{n\rightarrow\infty}
 \langle\, u \, | \, C_s^\dagger \, \sum_{k=0}^{n} \Pi_k v\,\rangle
 = \lim\limits_{n\rightarrow\infty} \sum\limits_{k=0}^{n}
 \langle\, \Pi_{k+1}\, u \, | \, C_s^\dagger \, \Pi_k\, v\,\rangle
  = \lim\limits_{n\rightarrow\infty} \sum\limits_{k=0}^{n}
 \langle\, C_s \, \Pi_{k+1}\, u \, | \, \Pi_k\, v\,\rangle.
\]
  Since $D(C_s^\dagger )\subset \ell_2(\mathbb{N})$ is a dense subset and map (\ref{map}) is bounded, the limit on the right remains bounded for all $v$ in the entire Fock space $\ell_2(\mathbb{N})$. Therefore, the vector $w$ determined by $\Pi_k\, w = C_s \, \Pi_{k+1}\, u $ for all $k = 0, 1, \ldots$  is in the dual space of the Fock space, which is exactly itself.
This means that $u \in D(C_s)$, which is what we have set out to prove. Thus, $ \left(C_s^\dagger \right)^* = C_s$ and, since $C_s^\dagger$ is the maximal extension (of itself), also $ C_s^\dagger = \left(C_s^\dagger \right)^{**} = C_s^*$.

\end{proof}
\vspace{.2cm}

\noindent \emph{Remark.} The operators $C_s^\dagger, C_s$ are somewhat reminiscent to the standard operators of the ideal quantum gas, denoted $a^*(f)$, $a(f)$, see e.g. \cite{Merkli}. However, they are not the same. To see the difference, observe that even if $f(p) = p^{-s^*}$, we have
\[
C_s^\dagger \, |p^n\rangle = \sqrt{n+1} p^{-s^*}\, |p^{n+1} \rangle+ \sum_{q\neq p} q^{-s^*}\, |q p^n\rangle, \quad \mbox{ whereas }
\quad a^*(f) \, |p^n\rangle = \sqrt{n+1}\sum_q q^{-s^*}\, |qp^n\rangle.
\]
In a way, $C_s^\dagger$ is more discerning of the quality of states than the other operator. That is because, the boson sites are distinguishable unlike the gas particles.
\vspace{.2cm}

Next, we define a generalization of the displacement operator to the case of infinitely many boson sites; namely:
\begin{equation}\label{D_s}
  D_s = \exp \left( C_s^\dagger - C_s \right).
\end{equation}
It follows from Theorem \ref{theorem_C} pt. 4 that $D_s$ is a unitary operator.
We have the following fundamental
\begin{theorem}\label{theor_D} Let $D_s$  be as in (\ref{D_s}). Then
 \begin{equation}\label{D_s_on1}
  D_s \, |1\rangle = |s^*\rangle \quad \mbox{ for all } s \mbox{ with } \sigma = \Re s > 1.
\end{equation}
Note that an analogous result for $1/2< \Re s\leq 1$ is not claimed.
\end{theorem}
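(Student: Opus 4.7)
The strategy is to reduce the identity to a finite-prime setting, where the classical single-mode displacement theory applies, and then pass to the limit. For each integer $N \geq 2$, introduce the truncated generators $C_{s,N} = \sum_{p \leq N} p^{-s}\hat{a}_p$ and $C_{s,N}^\dagger = \sum_{p \leq N} p^{-s^*}\hat{a}_p^\dagger$, together with the unitary $D_{s,N} := \exp(C_{s,N}^\dagger - C_{s,N})$. By (\ref{BCCR}), the summands $p^{-s^*}\hat{a}_p^\dagger - p^{-s}\hat{a}_p$ associated with distinct primes commute pairwise, so the exponential factorizes into a finite product of single-mode displacement operators
\begin{equation*}
D_{s,N} \;=\; \prod_{p \leq N} \exp\bigl( p^{-s^*}\hat{a}_p^\dagger - p^{-s}\hat{a}_p \bigr) \;=\; \prod_{p \leq N} D_p(p^{-s^*}),
\end{equation*}
each acting on the corresponding tensor factor of $\ell_2(\mathbb{N}) \cong \bigotimes_p \mathcal{F}_p$, with $|1\rangle$ identified with the product vacuum $\bigotimes_p |0\rangle_p$.

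Applying at each of the finitely many sites the textbook identity $D_p(\alpha)|0\rangle_p = e^{-|\alpha|^2/2}\sum_{n \geq 0}\frac{\alpha^n}{\sqrt{n!}}|p^n\rangle$ with $\alpha = p^{-s^*}$, and reorganizing the resulting tensor product via $\bigotimes_p |p^{a_p}\rangle \mapsto |\prod_p p^{a_p}\rangle$, yields
\begin{equation*}
D_{s,N}|1\rangle \;=\; e^{-P_{1,N}(2\sigma)/2} \sum_{k \in \mathcal{N}_N} \frac{k^{-s^*}}{x_k}\,|k\rangle,
\end{equation*}
where $\mathcal{N}_N := \{k : a_p(k) = 0 \text{ for all } p > N\}$ consists of the $N$-smooth integers and $P_{1,N}(2\sigma) := \sum_{p \leq N} p^{-2\sigma}$. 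This is exactly the partial Euler-product truncation anticipated in Remark 1, and a direct application of dominated convergence (with majorant provided by (\ref{the_N})) gives $D_{s,N}|1\rangle \to |s^*\rangle$ in $\ell_2(\mathbb{N})$ as $N \to \infty$.

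What remains — and this is the main obstacle — is to show that $D_{s,N}|1\rangle \to D_s|1\rangle$, so that the two limits coincide. The plan is to deduce strong convergence of the unitaries $D_{s,N} \to D_s$ via a Trotter–Kato style theorem for one-parameter unitary groups. By Corollary \ref{cor_C_closed}, the closure of $C_s^\dagger - C_s$ is anti-self-adjoint, and this is precisely where the hypothesis $\sigma > 1$ enters, through the $P_2(\sigma)$ estimates of Theorem \ref{theorem_C}; each truncation $C_{s,N}^\dagger - C_{s,N}$ is trivially anti-self-adjoint since only finitely many primes are involved. On the dense core of finite linear combinations of basis states $|k\rangle$, the tail estimate $\|(C_s^\dagger - C_{s,N}^\dagger)|k\rangle\|^2 = \sum_{p > N} p^{-2\sigma}(a_p(k)+1)$ tends to zero, and analogously for $C_s - C_{s,N}$, so the generators converge strongly on this common core. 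Strong convergence of anti-self-adjoint generators on a core forces strong convergence of the associated unitary groups; evaluating at parameter $1$ on $|1\rangle$ and combining with the Euler-product limit established above yields $D_s|1\rangle = |s^*\rangle$. The restriction $\sigma > 1$ cannot be relaxed by this argument because Corollary \ref{cor_C_closed} itself breaks down on $1/2 < \sigma \leq 1$, consistent with the caveat recorded after the theorem statement.
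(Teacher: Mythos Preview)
Your approach is correct and genuinely different from the paper's. The paper argues formally via the Baker--Campbell--Hausdorff identity: from $[C_s^\dagger,-C_s]=P_1(2\sigma)$ one writes $D_s=e^{-P_1(2\sigma)/2}\exp(C_s^\dagger)\exp(-C_s)$, observes that $\exp(-C_s)\,|1\rangle=|1\rangle$, and then expands $\exp(C_s^\dagger)\,|1\rangle$ directly as the Euler product $\prod_p\exp(p^{-s^*}\hat a_p^\dagger)\,|1\rangle$. Your route instead truncates to finitely many primes, invokes the classical single-mode displacement formula site by site, and then passes to the limit through a Trotter--Kato argument for the generators. What your approach buys is that the delicate step---applying BCH to unbounded, non-commuting operators---is replaced by a well-posed convergence-of-semigroups statement, with all algebraic manipulations confined to the finite-mode setting where they are unproblematic. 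What the paper's approach buys is brevity and a transparent structural explanation of why the normalization $e^{-P_1(2\sigma)/2}$ appears.

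One technical point you should tighten: the Trotter--Kato step requires that the set of finite linear combinations of the $|k\rangle$ is a \emph{core} for the anti-self-adjoint closure of $C_s^\dagger-C_s$, not merely a dense invariant domain. This does follow from Theorem~\ref{theorem_C}, since the restrictions $C_s(n),\,C_s^\dagger(n)$ are bounded on each $\mathbb{H}^{\odot n}$: first approximate any $v\in D(\hat N^{1/2})$ by its finite-particle truncations $\sum_{m\le M}\Pi_m v$ (graph-norm convergence follows from (\ref{C_norms})), and then within each $\mathbb{H}^{\odot m}$ use boundedness to pass to finite sums of basis vectors. You should state this explicitly rather than assert it.
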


\begin{proof}

Note that
\[
\left[C_s^\dagger, -C_s  \right] =\left[  \sum_{p} p^{-s^*}\, \hat{a}_p^\dagger \, , \, -\sum_{q}q^{-s} \hat{a}_q \right] = - \sum_{p,q} p^{-s^*} q^{-s} [ \hat{a}_p^\dagger\, , \, \hat{a}_q] = \sum_{p} p^{-2\sigma} = P_1(2\sigma).
\]
Recall the Baker-Campbell-Hausdorff formula; namely, for two Hilbert-space operators $A, B$, one has
\[
e^Ae^B = e^{A+B + \frac{1}{2}[A,B]}.
\]
Setting $A = C_s^\dagger$ and $B = - C_s$, we obtain an alternative representation of $D_s$, namely:
\begin{equation}\label{D_s_prime}
  D_s = e^{-P_1(2\sigma)/2}\,\exp \left( C_s^\dagger \right) \,\exp\left( - C_s\right).
\end{equation}
Note that $(\hat{a}_p^\dagger)^n \, |1\rangle = \sqrt{n!} \, |p^n\rangle $, and so
\[
 \exp \left(  p^{-s^*} \, \hat{a}_p^\dagger \right)\, |1\rangle = \sum_{n=0}^\infty  \frac{p^{-ns^*}}{n!} \, (\hat{a}_p^\dagger)^n \, |1\rangle = \sum_{n=0}^\infty  \frac{p^{-ns^*}}{\sqrt{n!}} \, |p^n\rangle.
\]
Also,
\[
 \exp \left(  q^{-s^*} \, \hat{a}_q^\dagger \right)\, \exp \left(  p^{-s^*} \, \hat{a}_p^\dagger \right)\,\, |1\rangle =  \sum_{m,n}  \frac{\left(q^{m}p^{n}\right)^{-s^*}}{\sqrt{m!}\sqrt{n!}} \,\,\, |q^mp^n\rangle, \ldots \mbox{ etc. }
\]
It follows that
\[
\exp \left(C_s^\dagger \right) \,\, |1\rangle =\exp \left( \sum_{p} p^{-s^*}\hat{a}_p^\dagger \right) \,\, |1\rangle = \prod_p \exp( p^{-s^*}\hat{a}_p^\dagger ) \, |1\rangle = \sum_k \frac{k^{-s^*}}{x_k} \, |k\rangle.
\]
Since, $ \exp\left( - \sum_{p}p^{-s} \hat{a}_p \right)\, |1\rangle = |1\rangle$, (\ref{D_s_prime}) implies
\begin{equation}\label{Ds_of1}
  D_s \, |1\rangle = e^{-P_1(2\sigma)/2}\,\sum_k \frac{k^{-s^*}}{x_k} \, |k\rangle = |s^* \rangle.
\end{equation}
 This completes the proof.
\end{proof}
\vspace{.5cm}

\noindent
\emph{Remark.} Formula (\ref{D_s_on1}) is a generalization of a well known formula that holds for a regular coherent state. 

\section{Hamiltonians}\label{section_Hamilt}

We consider several types of observables whose expectation can be evaluated in some explicit format whenever the system consisting of bosons placed on the infinite array of sites is in a NCS. This is a natural question from the physics standpoint. Indeed, NCS, being the eigenstates of the annihilation operators, are the natural candidates for the most robust states of a quantum system. Indeed, this argument is already classical with regards to regular CS,  \cite{Zurek}, and may be expected to hold in a more general context, such as the NCS. The discussion also highlights the special role of the Dirichlet ring.

\subsection{Expectation of the Bose-Hubbard Hamiltonian with all possible hopping terms}\label{subsection_number_op}

First observe that since $\hat{N}_p = \hat{a}_{p}^\dagger \hat{a}_{p}$, we have
\begin{equation}\label{s_particle_number_better}
  \langle s | \, \hat{N}\, | s\rangle = P_1(2\sigma).
\end{equation}
In particular, this shows that every coherent state caries a finite number of particles.
Also, since
\[
\hat{N}_p^2 = \hat{a}_{p}^\dagger \hat{a}_{p}\hat{a}_{p}^\dagger \hat{a}_{p}= \hat{a}_{p}^\dagger \hat{a}_{p}^\dagger \hat{a}_{p} \hat{a}_{p} + \hat{a}_{p}^\dagger \hat{a}_{p},
\]
we have
\begin{equation}\label{s_particle_number_sq}
  \langle s \,| \, \sum_{p}\hat{N}_p^2\, |\, s \rangle = P_1(2\sigma) + P_1(4\sigma).
\end{equation}
Recall the Bose-Hubbard Hamiltonian has the form
\begin{equation}\label{NT_BH}
  \mathcal{H}= \sum_{p}  \frac{U}{2}\, \hat{N}_{p}(\hat{N}_{p} - 1) - \mu\, \hat{N}_{p} - \tau\,\sum_{p,q} \, \hat{a}_{p}^\dagger \hat{a}_{q} + \hat{a}_{q}^\dagger \hat{a}_{p},
\end{equation}
It follows from a more general statement (\ref{Hamilt_n_N}), proven below, that this Hamiltonian commutes with the number operator.
One usually assumes that the summation in the hopping term is for nearest neighbor only. However, for our purposes it is easier to assume that hopping is equally probable for any pair $p,q$. In such a case,
\begin{equation}\label{s_hopping}
  \langle s | \,\,\sum_{p,q} \, \hat{a}_{p}^\dagger \hat{a}_{q} + \hat{a}_{q}^\dagger \hat{a}_{p}\,\, | s\rangle
  = 2\,|P_1(s)|^2,
\end{equation}
where one needs to assume $\Re s > 1$ to obtain convergence.
Collecting these findings, we obtain a formula for the expectation of the full Hamiltonian

\begin{equation}\label{s_BH}
  \langle s | \, \mathcal{H}\, | s\rangle = \frac{U}{2}\, P_1(4\sigma) -  \mu\, P_1(2\sigma) - 2\tau  \, |P_1(s)|^2, \quad \sigma > 1.
\end{equation}
 Thus,  the zeros of $P_1$ mark those NCS where the amplitude of disorder $\tau$ is not affecting the system's energy.  However, little is known about the number or position of the zeros of $P_1(s)$ in general, \cite{Froberg}. We know that when $\sigma$ is large enough, then $|2^{-s}| > |\sum_{p\geq 3} p^{-s}|$, so $P_1(s)$ cannot have any zeros on the right from a certain vertical cut-off line.

\subsection{Expectation formulas for a broader family of Hamiltonians}

A rich theory should be able to accommodate an ample selection of Hamiltonian operators. These are self-adjoint operators that commute with the number operator. We will demonstrate that the framework at hand possesses this feature. We start by calculating a few commutators. We will use the following well-known facts (easily proven by induction):
\[
[\, \hat{a}_p^n\, , \, \hat{a}_p^\dagger] = n \hat{a}_p^{n-1},\quad [\, \hat{a}_p \, , (\hat{a}_p^\dagger )^n\,] = n (\hat{a}_p^\dagger)^{n-1}\quad \mbox{ for all } n \geq 1.
\]
Utilizing these identities, and the definition (\ref{aks}), we obtain
\[
\begin{split}
[\, \hat{a}_k^\dagger\hat{a}_n\, , \, \hat{a}_p^\dagger\hat{a}_p\,]& =
 \hat{a}_k^\dagger\hat{a}_n\hat{a}_p^\dagger\hat{a}_p - \hat{a}_p^\dagger\hat{a}_p\hat{a}_k^\dagger\hat{a}_n \\ & \\
 & = \hat{a}_k^\dagger \left[ \, \hat{a}_p^\dagger\hat{a}_n + a_p(n) (\hat{a}_p)^{a_p(n) -1} \prod_{q\neq p} (\hat{a}_q)^{a_q(n)}  \,\right]\hat{a}_p
- \hat{a}_p^\dagger \left[ \, \hat{a}_k^\dagger\hat{a}_p + a_p(k) (\hat{a}_p^\dagger)^{a_p(n) -1} \prod_{q\neq p} (\hat{a}_q^\dagger)^{a_q(n)}  \,\right]\hat{a}_n \\ &\\
& = \hat{a}_{kp}^\dagger \hat{a}_{np} + a_p(n) \hat{a}_{k}^\dagger \hat{a}_{n}
- \hat{a}_{kp}^\dagger \hat{a}_{np} - a_p(k) \hat{a}_{k}^\dagger \hat{a}_{n} \\ \\
& = \left(a_p(n) - a_p(k) \right)\, \hat{a}_{k}^\dagger \hat{a}_{n} .
\end{split}
\]
Thus, for arbitrary $h_{n,k} \in \mathbb{C}$ we have
\begin{equation}\label{Hamilt_n_N}
  [\, h_{n,k}\, \hat{a}_n^\dagger\hat{a}_k  +
  h_{n,k}^*\, \hat{a}_k^\dagger\hat{a}_n\, , \, \hat{N}\, ]
  = h_{n,k}\, \left[\Omega(k) - \Omega(n) \right]\, \hat{a}_{n}^\dagger \hat{a}_{k} +
  h_{n,k}^*\, \left[\Omega(n) - \Omega(k) \right]\, \hat{a}_{k}^\dagger \hat{a}_{n}.
\end{equation}
It follows that hopping terms of the form $\hat{a}_k^\dagger\hat{a}_n + \hat{a}_n^\dagger\hat{a}_k$ commute with the number operator, provided $\Omega(n) = \Omega(k)$. Therefore, we have a broad family of Hamiltonians commuting with the number operator in the form:
\begin{equation}\label{Hamilt}
  \hat{H} = \sum_{n}\, \mbox{Poly}\left(: \, \hat{a}_n^\dagger\hat{a}_n\, :\right)\,\,  +  \sum_{n\neq k: \Omega(n)= \Omega(k)}  \left(\, h_{n,k}\, \hat{a}_n^\dagger\hat{a}_k  +
  h_{n,k}^*\, \hat{a}_k^\dagger\hat{a}_n\, \right),
\end{equation}
where $ \mbox{Poly}$ is any polynomial with real coefficients, and the powers of its argument $\hat{a}_n^\dagger\hat{a}_n$ are Wick ordered. All such operators $\mathcal{H}$ commute with the number operators. When the system is in an NCS, the expectation of such a Hamiltonian is:
\begin{equation}\label{s_Hamiltonian}
  \langle s | \, \hat{H}\, | s\rangle = \sum_{n} \mbox{Poly}\left(\, n^{-2\sigma}\,\right) \,\, +
  \sum_{n\neq k: \Omega(n)= \Omega(k)}  \left(\, h_{n,k}\,\,  n^{-s^*} k^{-s}  + c.c. \right) .
\end{equation}
\vspace{.5cm}

\noindent
\emph{Example.} Consider a generalization of the hopping term in (\ref{NT_BH}), which is given by
\begin{equation}\label{Hamilt_spec_zeta}
  \hat{H}_N =   \sum_{k,n: \Omega(k) = \Omega(n)\leq N } \, \hat{a}_n^\dagger\hat{a}_k,
\end{equation}
  It is seen by inspection that
\begin{equation}\label{s_Hamiltonian_spec_zeta}
  \langle s | \, \hat{H}_N\, | s\rangle =  1 + |P_1(s)|^2+ |P_2(s)|^2+ |P_3(s)|^2 +\ldots + |P_N(s)|^2 .
\end{equation}

\subsection{The Dirichlet ring represented via the annihilation operators}\label{subsection_Dirichlet_ring}

We outline a method which leads to additional Hamiltonians that can be analyzed using the NCS. This construction will come to full fruition in Section \ref{section_FT}.
Recall that the algebra of formal Dirichlet series consists of all series of the form
$
\tilde{f}(s) = \sum_k f(k) k^{-s} .
$
where $f: \mathbb{N}\rightarrow \mathbb{C}$ is arbitrary. As is well known,
$
 \tilde{f}(s)  \tilde{g}(s) = \tilde{h}(s),
$
where $h = f\star g$ (the Dirichlet convolution of $f$ and $g$), is defined via
$
h (k) = \sum_{d|k} f(d) g(k/d).
$
With this understood, we introduce operators of the form:
\begin{equation}\label{Df_def}
 \hat{F}_s =f(1)+ f(2)\, 2^{-s}\,\hat{a}_2 + f(3)\, 3^{-s} \, \hat{a}_3 + f(4)\, 4^{-s}\, \hat{a}_4 + \ldots .
\end{equation}
Note that operator $C_s$ defined in (\ref{def_Cs}) is of this type with $f$ being the indicator function of the primes. 
Identity (\ref{akan}) implies that
\begin{equation}\label{Zfprod}
  \hat{F}_s \hat{G}_s = \hat{G}_s \hat{F}_s = \hat{H}_s \quad \mbox{ where } h = f \star g.
\end{equation}
Thus, construction (\ref{Df_def}) gives yet another representation of the Dirichlet ring.
Identity (\ref{s_is_eigen4n}) readily implies
\begin{equation}\label{D_f_on_s}
\hat{F}_{s'}\, |s\rangle = \tilde{f}(s+s')\, |s\rangle .
\end{equation}
A more general identity is derived below, see (\ref{Df_on_Us}). We defer a discussion how operators of type $F_s$ may be utilized in construction of Hamiltonians, with calculable expectation when the system is in NCS, to Section \ref{section_FT}.

\section{The special unitary operator and the Fourier duality for the multiplicative group of positive rationals}\label{section_FT}

In this section, we highlight the role of the multiplicative group of positive rationals, $\mathbb{Q}_+$. The general theory of abstract harmonic analysis has textbook expositions, e.g., \cite{Rudin}, \cite{Folland}. In its $\mathbb{Q}_+$ -specialized form, it has been applied in number theory, see \cite{Elliott}. However, to our knowledge, it has not been applied in quantum theory until the authors introduced it in \cite{Sowa_Fransson} within the context of boson theory. Specifically, it has been shown that the transform provides a bridge between the number-theoretic framework, on one hand, and the Kastrup model of the harmonic oscillator in $H_2(\mathbb{T}^1)$, see \cite{Kastrup}, on the other. Here, we demonstrate how the transform enables the construction of a far-reaching generalization of the NCS introduced in Section \ref{section_construction}.

\subsection{The generalized Fourier transform}

The unitary map introduced in Subsection \ref{seubsection_Uf} is related to the Fourier transform on the group of positive rationals $\mathbb{Q}_+$. Recall that the Pontryagin dual group is an infinite torus
\[
\mathbb{T}^\omega = \hat{\mathbb{Q}}_+ = \prod_{p \in \mathcal{P}} U(1).
\]
Note that the dimensions of the torus (represented by copies of $U(1)$) are indexed by primes. Also, it is natural to equip $\mathbb{T}^\omega$ with the global coordinates $\vec{\mu} = (\mu_2, \mu_3, \mu_5, \ldots )$.

Recall that $\mathbb{Q}_+$ is equipped with discrete topology and discrete measure.  On the other hand, $\mathbb{T}^\omega$ is equipped with the product topology, making it into a compact space. It is also endowed with the bi-invariant probabilistic measure $d\vec{\mu} = d\mu_2\, d\mu_3 \, d\mu_5 \ldots$.  In addition $d\vec{\mu}$ is a Borel measure, which satisfies
\[
d\vec{\mu} \left( (\alpha_2, \beta_2]  \times  \ldots \times (\alpha_p, \beta_p] \times (0,1]  \times (0,1] \times \ldots  \right) =
|\beta_2 - \alpha_2| \ldots |\beta_p - \alpha_p|.
\]
The set of point measures $\delta_w, w\in \mathbb{Q}_+$ furnishes an distinct orthonormal basis in $\ell_2(\mathbb{Q}_+)$. On the other hand $L_2(\mathbb{T}^\omega, d\vec{\mu})$ has a distinct orthonormal basis consisting of functions $\exp 2\pi i \vec{w}\cdot \vec{\mu}$, where  $\vec{w} = (a_2(w), a_3(w), a_5(w), \ldots)$, and $a_p(w)$ are determined by the prime decomposition
\[
w = \prod_{p\in \mathcal{P}} p^{a_p}, \, a_p \in\mathbb{Z}.
 \]
The corresponding Fourier transform exchanges these distinct basis functions, namely
\begin{equation}\label{FT_simple}
\mbox{FT}: \delta_w \leftrightarrow e^{2\pi i \vec{w}\cdot \vec{\mu}}.
\end{equation}
In other words, for a function $f = \sum_{w\in \mathbb{Q}_+} z_w \delta_w $, one has its transform in the form
\[
\hat{f}(\vec{\mu}) = \sum_{w\in \mathbb{Q}_+} z_w e^{2\pi i \vec{w}\cdot \vec{\mu}}.
\]
A special role is played by the subspace $\ell_2(\mathbb{N})\subset \ell_2(\mathbb{Q}_+) $. Let $P_+: \ell_2(\mathbb{Q}_+) \rightarrow \ell_2(\mathbb{N})$ be the orthogonal projection. The Fourier duality endows the complementary subspace $H_2(\mathbb{T}^\omega, d\vec{\mu})$. It consists of functions whose only nonzero coefficients are those whose index is a natural number, i.e., the coefficients with fractional indices vanish. The corresponding orthogonal projection will be denoted by the same symbol.
The following diagram captures the essential features of the two mirroring structures:
\begin{equation}\label{diagram}
\begin{array}{lccc}
P_+:& L_2(\mathbb{T}^\omega, d\vec{\mu}) & \rightarrow & H_2(\mathbb{T}^\omega, d\vec{\mu})\\
& \\
 & \big\updownarrow\mbox{FT} & & \big\updownarrow\mbox{FT} \\
  & \\
 P_+:& \ell_2(\mathbb{Q}_+) & \rightarrow & \ell_2(\mathbb{N})
\end{array}
\end{equation}
When the Fourier transform is restricted to $ \ell_2(\mathbb{N})$ it assumes the form
\begin{equation}\label{FT_simple2}
\mbox{FT}: \delta_n = | n\rangle \leftrightarrow e^{2\pi i \vec{n}\cdot \vec{\mu}}.
\end{equation}

\subsection{NCS in the Fourier-dual picture}\label{seubsection_NCS_F_Dual}

Applying the transform (\ref{FT_simple2}) to the NCS (\ref{the_s_all}), we obtain its Fourier-dual representation, which we will also denote $|s\rangle$, namely:
\begin{equation}\label{the_s_F_dual}
   |s\rangle = e^{-P_1(2\sigma)/2}\, \sum_k \frac{k^{-s}}{x_k}\, e^{2\pi i \vec{k}\cdot \vec{\mu}}.
\end{equation}
The following factorization is immediate:
\begin{equation}\label{the_s_factorized}
  |s\rangle = e^{-P_1(2\sigma)/2}\, \prod_{p}\,\sum_{n=0}^{\infty} \frac{p^{-n s}}{\sqrt{n !}}\, e^{2\pi i n \mu_p}.
\end{equation}
This is a reformulation of (\ref{inf_tens_pr}) that replaces the infinite tensor product with the ordinary infinite product.

Furthermore,   in the Fourier-dual representation we have
\begin{equation}\label{N_p_F_dual}
  \hat{N}_p = \frac{1}{2\pi i} \frac{\partial}{\partial \mu_p}.
\end{equation}
It follows that
\begin{equation}\label{N_p_comm_exp}
  [\,  \hat{N}_p\, , \, e^{2\pi i \mu_p} \,] = e^{2\pi i \mu_p},
\end{equation}
where $e^{2\pi i \vec{k}\cdot \vec{\mu}}$ is the multiplication operator. Equivalently, the above may be stated in the form
\begin{equation}\label{N_p_N_p_plus}
e^{2\pi i \mu_p} \, (\hat{N}_p + 1)   \, e^{-2\pi i \mu_p}  = \hat{N}_p.
\end{equation}
The following identities have been demonstrated in \cite{Sowa_Fransson} (and can easily be verified directly):
\begin{equation}\label{a_p_F_dual}
  \hat{a}_p = (\hat{N}_p+1)^{-1/2}\, e^{-2\pi i \mu_p} \, \hat{N}_p, \quad\quad
   \hat{a}_p^\dagger = \hat{N}_p\, e^{2\pi i \mu_p} \, (\hat{N}_p+1)^{-1/2}.
\end{equation}
These relations are a form of the Holstein-Primakoff transform, \cite{Holstein-Primakoff}. Note that these relations, as well as the factorization (\ref{the_s_factorized}), facilitate calculations involving the action of these common operators on $|s\rangle$.

\subsection{A special unitary operator and generalized NCS}\label{seubsection_Uf}

The NCS representations given by (\ref{the_s_all}) and (\ref{the_s_F_dual}) may be welded together to yield generalized nonlocal coherent states. To this end, we introduce the special unitary map $U_{\vec{\mu}}: \ell_2(\mathbb{N}) \rightarrow \ell_2(\mathbb{N})$, given by
\begin{equation}\label{def_U_mu}
  U_{\vec{\mu}} = \exp \left(2\pi i \sum_{p\in \mathcal{P}} \mu_p \hat{N}_p\right).
\end{equation}
Since $\hat{N}_p \, |k\rangle = a_p(k)\, |k\rangle$, we have $  U_{\vec{\mu}}\, |k\rangle = e^{2\pi i \vec{k} \cdot\vec{\mu} } \, |k\rangle$.
Thus,
\begin{equation}\label{Uf_basics}
 U_{\vec{\mu}} \, |s\rangle = e^{-P_1(2\sigma)/2}  \sum \frac{k^{-s}}{x_k}e^{2\pi i \vec{k} \cdot\vec{\mu} } \, |k\rangle.
\end{equation}
This defines the first generalization of the NCS, namely, $|s, \vec{\mu}\rangle := U_{\vec{\mu}} \, |s\rangle$. Note that by (\ref{an_onk}):
\begin{equation}\label{calc_n_mu}
\hat{a}_n \,  \sum \frac{k^{-s}}{x_k}e^{2\pi i \vec{k} \cdot\vec{\mu} } \, |k\rangle
  =  \sum \frac{k^{-s}}{x_k}e^{2\pi i \vec{k} \cdot\vec{\mu} } \, \frac{x_k}{x_{k/n}} \, |\, k/n\rangle
  =  n^{-s} e^{2\pi i \vec{n} \cdot \vec{\mu}} \,\, \sum \left(\frac{k}{n}\right)^{-s}\frac{1}{x_{k/n}} e^{2\pi i (\vec{k}-\vec{n})\cdot\vec{\mu}} \,\, |\, k/n\rangle .
\end{equation}
Therefore,
\begin{equation}\label{an_on_Uf_s}
 \hat{a}_n \,    |s, \vec{\mu}\rangle :=    n^{-s} e^{2\pi i  \vec{n}\cdot \vec{\mu}} \,\,   |s, \vec{\mu}\rangle
\end{equation}
In particular,
\begin{equation}\label{more_eigvals}
   \langle s, \vec{\mu}\, | \, \, \sum_{n}\, \mbox{Poly}\left(: \, \hat{a}_n^\dagger\hat{a}_n\, :\right)\, \,  |s, \vec{\mu}\,\rangle = \sum_{n} \mbox{Poly}\left(\, n^{-2\sigma}\,\right).
\end{equation}
Also, for operators $  \hat{F}_s$ as in (\ref{Df_def}) we have:
\begin{equation}\label{Df_on_Us}
  \hat{F}_{s'}\,  |s, \vec{\mu}\,\rangle =\left(\sum f(n) \,  n^{-(s+s')}\, e^{2\pi i \vec{n}\cdot \vec{\mu}} \right) \,  |s, \vec{\mu}\,\rangle .
\end{equation}
This identity reduces to (\ref{D_f_on_s}) when $\vec{\mu} =0$, i.e., $U_{\vec{\mu}} = I$. Thus,
\begin{equation}\label{Dirichlet_interaction}
  \langle s, \vec{\mu}\, | \, \hat{F}_{s'}^\dagger\,\hat{F}_{s'}\,  |s, \vec{\mu}\,\rangle = \left|\sum f(n) \,  n^{-(s+s')}\, e^{2\pi i \vec{n}\cdot \vec{\mu}}  \right|^2.
\end{equation}
Formulas (\ref{more_eigvals}), (\ref{Dirichlet_interaction}) provide explicit expectation for systems in the NCS with a Hamiltonian of the form
\begin{equation}\label{_HU}
 \mathcal{H}_{f,s',\vec{\mu}, \,\mbox{\small Poly}} =  \,\left( \sum_{n}\, \mbox{Poly}\left(: \, \hat{a}_n^\dagger\hat{a}_n\, :\right) +  \hat{F}_{s'}^\dagger\,\hat{F}_{s'}\, \right)
\end{equation}
when the system is in the state $|s, \vec{\mu}\,\rangle$.
Of course, formal properties of such Hamiltonians, including the self-adjointness, will depend on the choice of $f$. This can be nontrivial, as we have seen in the case of operator $C_s$ (where the corresponding $f$ is the indicator function of the primes). Nevertheless there are a plethora of additional examples where it is not too difficult to establish these properties, e.g. whenever $f$ is finitely supported.

\subsection{The special unitary operator and the Fourier duality}

We point out that the unitary operators $U_{\vec{\mu}}$ is closely related to the Fourier duality. To see this, we need to involve a linear functional $\ell_1 : \ell_2(\mathbb{N})\rightarrow \mathbb{C}$, defined by
\[
\ell_1 \left( \, \sum_{n\in\mathbb{N}} c_n \delta_n \,\right) = \sum_{n\in\mathbb{N}} c_n.
\]
Clearly $\ell_1$ is not a bounded functional. Note that in $\ell_2(\mathbb{N})$ we have two ways of referring to the distinct basis, i.e., $\delta_k = |k\rangle, k \in \mathbb{N}$. It follows from the first basic identity (\ref{Uf_basics}) that
\[
  \ell_1 \left(U_{\vec{\mu}} \, \, \sum_{n\in\mathbb{N}} c_n \delta_n \right)  = \sum_{n\in\mathbb{N}} c_n \exp(2\pi i \vec{n}\cdot \vec{\mu}).
\]
 In other words,
 \begin{equation}\label{FT_via_Umu}
 \ell_1 \circ  U_{\vec{\mu}}\, ( f ) = \hat{f}(\vec{\mu}) \quad \mbox{ for all } \quad   f\in \ell_2(\mathbb{N}).
 \end{equation}
In this way the Fourier transform on $\mathbb{Q}_+$ arises from the underlying quantum theoretic framework.

\section{The second generalization of NCS and the resolution of identity} \label{section_res_I}

A closer examination reveals that neither $|s\,\rangle$ nor $|s, \vec{\mu}\,\rangle$ admit a resolution of the identity formula that would be analogous to the well known formula for the regular coherent states. However, such a formula can be obtained if one generalizes the NCS further. To this end,  $s$ is now fixed but arbitrary with the requirement  $\sigma = \Re s > 1/2$. Furthermore,  we introduce variable $\vec{r} = (r_2, r_3, r_5, \ldots) \in \mathbb{R}_+^\omega $, so that each $r_p\in (0,\infty)$ with the additional requirement
\begin{equation}\label{y_ps}
P_1(2\sigma, |\vec{r}|^2) = \sum_{p} p^{-2\sigma}r_p^2 <\infty.
\end{equation}
We endow $\mathbb{R}_+^\omega$ with the product topology. Also, let $ d\chi_p = 2 \exp(-p^{-2\sigma}r_p^2) p^{-2\sigma} r_p \, dr_p$, so that
\begin{equation}\label{d_chi_p}
  \int_{0}^{\infty}\, d\chi_p = 1.
\end{equation}
Analogously to the case of the measure $d\vec{\mu}$ on $\mathbb{T}^\omega $ featured in Section \ref{section_FT}, we introduce a probabilistic Borel measure $d \vec{\chi}$ on $\mathbb{R}_+^\omega$, determined by the requirement
\[
d \vec{\chi} \left( (a_2, b_2]  \times  \ldots \times (a_p, b_p] \times (0, \infty)  \times (0,\infty) \times \ldots  \right) =
\int_{a_2}^{b_2} d \chi_2 \ldots \int_{a_p}^{b_p} d\chi_p,
\]
for arbitrary $0\leq a_2< b_2\leq \infty,  \,\ldots \, 0\leq a_p< b_p\leq \infty$.

We now define a second generalization of the NCS via
\begin{equation}\label{sxmu}
 |\, s, \vec{r}, \vec{\mu} \, \rangle = e^{-\frac{1}{2}  P_1(2\sigma, |\vec{r}|^2)} \sum_{k} \frac{k^{-s}}{x_k}\, \prod r_p^{a_p(k)}\,e^{2\pi i \vec{k}\cdot \vec{\mu}}\, |\,k\,\rangle .
\end{equation}
Note that
\[
\sum_{k} \frac{k^{-s}}{x_k}\, \prod r_p^{a_p(k)} \,e^{2\pi i \vec{k}\cdot \vec{\mu}}\, |\,k\,\rangle = \bigodot_{p} \sum_{n=0}^{\infty} \frac{\left(p^{-s}r_p e^{2\pi i \mu_p}\right)^n}{\sqrt{n!}} \, |\, p^n\rangle.
\]
Also,
\[
\sum_{k} \left|\frac{k^{-s}}{x_k}\, \prod r_p^{a_p(k)} \,e^{2\pi i \vec{k}\cdot \vec{\mu}}\,\right|^2 = \prod_{p} \sum_{n=0}^{\infty} \frac{\left(p^{-2\sigma}r_p^2\right)^n}{n!}  = \prod_{p} \exp{p^{-2\sigma}r_p^2} = \exp P_1(2\sigma, |\vec{r}|^2),
\]
which demonstrates that $ |\, s, \vec{r}, \vec{\mu} \, \rangle$ have norm $1$.

Next, we examine
\begin{equation}\label{res1}
   |\, s, \vec{r}, \vec{\mu} \, \rangle \langle \, s,\vec{r}, \vec{\mu}\, | =
   e^{- P_1(2\sigma, |\vec{r}|^2)} \sum_{k,l} \frac{k^{-s}l^{-s^*}}{x_kx_l}\, \prod r_p^{a_p(k)+a_p(l)}\,e^{2\pi i (\vec{k}-\vec{l})\cdot \vec{\mu}}\, \,|\,k\,\rangle\langle\, l\, |.
\end{equation}
Applying integration with respect to $d\vec{\mu}$ eliminates the off-diagonal terms, i.e.,
 \[
 \int_{\mathbb{T}^\omega }\, d\vec{\mu}\,\, |\, s, \vec{r}, \vec{\mu} \, \rangle \langle \, s, \vec{r}, \vec{\mu}\, | =
  e^{- P_1(2\sigma, |\vec{r}|^2)} \sum_{k} \frac{k^{-2\sigma}}{x_k^2}\, \prod r_p^{2a_p(k)}\, |\,k\,\rangle\langle\, k\, |.
 \]
 For every $k$, the term at $|\,k\,\rangle\langle\, k\, |$ is positive, and has the form of a product of finitely many factors of the form
 \[
 \exp(-p^{-2\sigma}r_p^2) \,\, (p^{a_p(k)})^{-2\sigma} r_p^{2a_p(k)}\, \frac{1}{a_p(k) !},
 \]
 and, in addition, infinitely many factors $ \exp(-q^{-2\sigma}r_q^2)$, where $q$ runs over all those primes that are not divisors of $k$.
Note that
\[
 \int_{0}^{\infty}  \exp(-p^{-2\sigma}r_p^2) \,\, (p^{-\sigma} r_p)^{2a_p(k)}\, \frac{1}{a_p(k) !} \,
 2 p^{-2\sigma} r_p \, dr_p = 1.
\]
This together with (\ref{d_chi_p}) implies that
\begin{equation}\label{res_identity}
  \int_{\mathbb{R}_+^\omega}\, d\vec{\chi} \, e^{P_1(2\sigma, |\vec{r}|^2)} \,\int_{\mathbb{T}^\omega}\,d\vec{\mu} \,\, |\, s, r, \vec{\mu} \, \rangle \langle \, s, r, \vec{\mu}\, | = \sum_{k} \, |\,k\,\rangle\langle\, k\, | = I.
\end{equation}
This is the coveted resolution of the identity formula.  It implies in particular that
\[
\{ |\,s, \vec{r}, \vec{\mu}\,\rangle: \vec{\mu}\in\mathbb{T}^\omega \mbox{ and } \vec{r} \in \mathbb{R}_+^\omega, \mbox{ such that } (\ref{y_ps}) \mbox{ holds} \}
\]
is an over-complete system in $\ell_2(\mathbb{N})$ for any $s$ with $\sigma >1/2$. Note that in (\ref{res_identity}) the parameter $s$ is arbitrary but fixed.
\vspace{.5cm}

\noindent
\emph{Remark 1.} Note that in a way we utilize the term  $\exp( -P_1(2\sigma, |\vec{r}|^2))$ in the definition of $d\vec{\chi}$, only to remove it later on via the factor $\exp(P_1(2\sigma, |\vec{r}|^2)$ present in the formula (\ref{res_identity}). The initial insertion is necessary to have a rigorous definition of $d\vec{\chi}$. The subsequent removal is necessary because the term $\exp( -P_1(2\sigma, |\vec{r}|^2))$ is already present as the normalizing factor of $|\, s, \vec{r}, \vec{\mu} \, \rangle \langle \, s, \vec{r}, \vec{\mu}\, |$.
\vspace{.5cm}

\noindent
\emph{Remark 2.}
Recall that the well-known resolution of the identity formula for the regular coherent states has the form $\pi^{-1} \int_{\mathbb{C}} |z\rangle\langle z| \, d^2 z = I$. This might raise hopes that one could find an analogous formula for the NCS based on integration of $|s\rangle\langle s|$ over the half plane $\{s: \Re s> 1/2\} $ with a suitable measure. However, such hope is futile, given that the integrands are of the form $k^{-2\sigma}/ x_k^2$ and the constants $x_k$ depend on the morphology of $k$ and not on its magnitude, e.g. the integral of $p^{-2\sigma}$ with respect to some measure on the $\sigma$ axis would have to be the same for all primes $p$, which is not possible. This difficulty is circumvented in  (\ref{res_identity}) by introducing the indispensable variables $r_p, \mu_p$.

\subsection{Generalized NCS in the complex notation}

A substitution $z_p = r_p \exp{(2\pi i \mu_p)}$ allows a more compact notation and facilitates calculations. In new notation (\ref{sxmu}) assumes the form
\begin{equation}\label{sz}
  |\, s, \vec{z}\, \rangle   = e^{-\frac{1}{2}  P_1(2\sigma, |\vec{z}|^2)} \sum_{k} \frac{k^{-s}}{x_k}\, \prod z_p^{a_p(k)}\, |\,k\,\rangle,
  \quad \mbox{ where }\quad
P_1(2\sigma, |\vec{z}|^2) = \sum_{p} p^{-2\sigma}|z_p|^2.
\end{equation}
Note that $z_p$ are nonzero but otherwise arbitrary complex numbers, albeit the sequence $z_2, z_3,z_5,\ldots$ is required to satisfy the constraint (\ref{y_ps}) or, equivalently:
\begin{equation}
    P_1(2\sigma, |\vec{z}|^2) = \sum_{p} p^{-2\sigma}|z_p|^2 =: \| \vec{z} \|_\sigma^2.
\end{equation}
Interestingly, the norms $\| \vec{z} \|_\sigma$ are analogous to the Sobolev norms, except the weights are introduced via the primes. 

In this notation, the resolution of the identity assumes the form
\begin{equation}\label{res_identity_z}
  \int_{\mathbb{R}_+^\omega}\, d\vec{\chi} \, e^{P_1(2\sigma, |\vec{z}|^2)} \,\int_{\mathbb{T}^\omega}\,d\vec{\mu} \,\, |\, s, \vec{z} \, \rangle \langle \, s, \vec{z}\, | = I .
\end{equation}

A straightforward calculation gives an inner product formula for the generalized NCS analogous to (\ref{prod}) in the form:
  \begin{equation}\label{prod_z}
     \langle s, \vec{z}\, |\, s', \vec{z'} \rangle = \exp \left(-P(2\sigma, |\vec{z}|^2)/2 -P(2\sigma', |\vec{z'}|^2)/2 + P(s^* + s', \vec{z^*}\cdot\vec{z'})\right) ,
  \end{equation}
  where $P(s^* + s', \vec{z^*}\cdot\vec{z'}) = \sum_{p} p^{-(s^*+s')} z_p^*z_p'$.
A direct calculation analogous to (\ref{calc_n_mu}) yields
  \begin{equation}\label{a_p_s_z}
    \hat{a}_p\, |\, s, \vec{z} \rangle = p^{-s} z_p \, |\, s, \vec{z} \rangle\quad \mbox{and, more generally, } \quad \hat{a}_n\, |\, s, \vec{z} \rangle = n^{-s} \prod_p  z_p^{a_p(n)} \, |\, s, \vec{z} \rangle.
  \end{equation}
    Since $\langle\, s, \vec{z}\, |\,\hat{a}_p^\dagger\hat{a}_p\, |\, s, \vec{z} \, \rangle = p^{-2\sigma} |z_p|^2$, we obtain a generalization of (\ref{s_particle_number_better}), namely:
\begin{equation}\label{Nsz}
\langle\, s, \vec{z}\, |\,\hat{N}\, |\, s, \vec{z} \, \rangle  =P_1(2\sigma, |\vec{z}|^2),
\end{equation}
i.e., the expected number of particles of the state $|\, s, \vec{z} \, \rangle$ depends on $\sigma$ as well as $\vec{z}$.
We also have the following
  \begin{corollary} \label{cor_spectrum_a_n}
  For all $n$ the spectrum of $\hat{a}_n$  is the entire complex plane $\mathbb{C}$.
  \end{corollary}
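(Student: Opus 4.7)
The plan is to exhibit, for each $\lambda \in \mathbb{C}$, an explicit eigenvector of $\hat{a}_n$ with eigenvalue $\lambda$, thereby placing every $\lambda$ in the point spectrum (hence the spectrum). The entire argument rests on the eigenvalue identity (\ref{a_p_s_z}), with the implicit understanding that $n \geq 2$ (the case $n=1$ being trivial since $\hat{a}_1 = 1$).

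First I would handle $\lambda = 0$ separately. Since $n \geq 2$ does not divide $1$, formula (\ref{an_onk}) gives $\hat{a}_n |1\rangle = 0$, so the vacuum state witnesses that $0$ lies in the point spectrum.

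For $\lambda \neq 0$, I would fix any $s$ with $\sigma = \Re s > 1/2$ (for instance $s = 1$) and select a prime $p_0$ dividing $n$, which exists because $n \geq 2$, so that $a_{p_0}(n) \geq 1$. Then I define the sequence $\vec{z} = (z_2, z_3, z_5, \ldots)$ by setting $z_p = 1$ for every prime $p \neq p_0$ and choosing $z_{p_0}$ to be any nonzero $a_{p_0}(n)$-th complex root of $\lambda n^s$. The admissibility constraint (\ref{y_ps}) reduces to
\[
\sum_{p \neq p_0} p^{-2\sigma} + p_0^{-2\sigma}|z_{p_0}|^2 \leq P_1(2\sigma) + p_0^{-2\sigma}|z_{p_0}|^2 < \infty,
\]
which is automatic for $\sigma > 1/2$. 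Consequently $|s, \vec{z}\rangle$ is a well-defined unit vector in $\ell_2(\mathbb{N})$, and (\ref{a_p_s_z}) immediately yields
\[
\hat{a}_n |s, \vec{z}\rangle = n^{-s}\prod_p z_p^{a_p(n)} \, |s, \vec{z}\rangle = n^{-s}\, z_{p_0}^{a_{p_0}(n)}\, |s, \vec{z}\rangle = \lambda\, |s, \vec{z}\rangle.
\]

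There is essentially no obstacle to overcome: the heavy lifting was done in establishing (\ref{a_p_s_z}) and the summability criterion (\ref{y_ps}). The only delicate point is to ensure that the admissibility constraint can be met \emph{simultaneously} with the eigenvalue requirement, and this is trivial once one notices that only finitely many coordinates of $\vec{z}$ (in fact, only $z_{p_0}$) participate in the eigenvalue, while the remaining coordinates can be assigned any tail — for example unity — that keeps $\sum_p p^{-2\sigma}|z_p|^2$ finite.
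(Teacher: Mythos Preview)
Your proof is correct and follows essentially the same route as the paper: both use the eigenvalue identity (\ref{a_p_s_z}) for the generalized NCS $|s,\vec{z}\rangle$ to realize every nonzero $\lambda$ as an eigenvalue of $\hat{a}_n$. The only minor difference is that the paper handles $\lambda=0$ by invoking closedness of the spectrum, whereas you show directly that $0$ is an eigenvalue via the vacuum $|1\rangle$ --- a slightly sharper observation, but not a different idea.
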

  \begin{proof}
    Fix $s$ with $\sigma > 1/2$. Note that only a finite collection of $a_p(n)$ are nonzero, and the corresponding $z_p$ in (\ref{a_p_s_z}) are arbitrary complex numbers other than $0$. Selecting $z_p$ one can set the eigenvalue $n^{-s} \prod_p  z_p^{a_p(n)}$ to any value other than $0$. The spectrum of $\hat{a}_n$ is the closure of this set, i.e., the entire complex plane.   \end{proof}

\subsection{The displacement operator for the generalized NCS}

We observe that all the statements of Section \ref{section_Cs} easily generalize, leading to the displacement operator that generates the generalized NCS. To make this evident, we introduce  functions analogous to (\ref{n_prime_zeta}), namely:
\begin{equation}\label{n_prime_zeta_gen}
  P_n(s, \vec{z}) = \sum_{k: \Omega(k) = n}  k^{-s}\prod_p z_p^{a_p(k)}, \quad  \sigma > 1.
\end{equation}
when $s=2\sigma$ and $z_p$ are replaced with $|z_p|^2$, we will use notation  $P_n(2\sigma, |\vec{z}|^2)$ consistently with $P_1(2\sigma, |\vec{z}|^2)$ introduced above. We easily obtain an analog of (\ref{P_2_to_P_1}), namely
\begin{equation}\label{P_2_to_P_1gen}
  P_2(s, \vec{z}) = \frac{1}{2}\left( P_1(s, \vec{z})^2 + P_1(2s, \vec{z}) \right).
\end{equation}
With this understood Theorem \ref{theorem_C} holds  when the operators $C_s^\dagger, C_s$ are replaced with their analogues:
  \begin{equation}\label{def_Csgen}
      C_{s, \vec{z}} = \sum_{p}p^{-s}z_p\, \hat{a}_p,\quad  
 C_{s, \vec{z}}^\dagger = \sum_{p}p^{-s^*}z_p^* \, \hat{a}_p^\dagger,
\end{equation}
and all $P_n(2\sigma)$ are simultaneously replaced by $P_n(2\sigma, |\vec{z}|^2)$. The proof carries over verbatim by replacing every $p^{-s}$ with $p^{-s}z_p$.  

Furthermore, Theorem \ref{theor_D} also generalizes directly, assuming the form
\begin{equation}
    \exp \left( C_{s, \vec{z}} ^\dagger - C_{s, \vec{z}} \right)\, |1\rangle = |s^*, \vec{z^*}\rangle.
\end{equation}
The generalized displacement operator $$D_{s, \vec{z}} =  \exp \left( C_{s, \vec{z}} ^\dagger - C_{s, \vec{z}} \right)$$ is unitary. 

\subsection{Differential representation of the creation and annihilation operators.}
  As a consequence of (\ref{a_p_s_z}) we obtain
  \begin{equation}\label{a_p_star_psi}
    \langle s, \vec{z}\, | \,\hat{a}_p^\dagger\, |\, \psi \,\rangle = p^{-s^*}z_p^*  \langle s, \vec{z}\,\, |\,\psi\, \rangle, \quad
    \mbox{ for any } |\, \psi\,\rangle \in \ell_2(\mathbb{N}).
  \end{equation}
  On the other hand, by (\ref{prod_z}), we have
  \[
  \langle s, \vec{z}\, | \,\hat{a}_p\, |\, s', \vec{z'} \rangle = p^{-s'}z_p'  \langle s, \vec{z}\, |\, s', \vec{z'} \rangle  =
  \left( p^{s^*} \partial_{z_p^*} + p^{-s} z_p /2 \right)  \langle s, \vec{z}\, |\, s', \vec{z'} \rangle.
  \]
  Together with (\ref{res_identity_z}) this implies
   \begin{equation}\label{a_p_psi}
    \langle s, \vec{z}\, | \,\hat{a}_p\, |\, \psi \,\rangle =   \left( p^{s^*} \partial_{z_p^*} + p^{-s} z_p /2 \right)  \langle s, \vec{z}\,\, |\,\psi\, \rangle, \quad
    \mbox{ for any } |\, \psi\,\rangle \in \ell_2(\mathbb{N}).
  \end{equation}
  The above formula can be simplified via judicious scaling of the product $\langle s, \vec{z}\,\, |\,\psi\, \rangle$. Namely, set
  \begin{equation}\label{complex_rep}
     f_{\psi}( s^*, \vec{z^*}) = e^{P(2\sigma, |\vec{z}|^2)/2} \, \langle s, \vec{z}\,\, |\,\psi\, \rangle.
  \end{equation}
  Then,
  \begin{equation}\label{ap-del}
f_{\hat{a}_p\psi}( s^*, \vec{z^*})  =  p^{s^*} \partial_{z_p^*} \, f_{\psi}( s^*, \vec{z^*}), \quad
    f_{\hat{a}_p^\dagger \psi}( s^*, \vec{z^*})  =  p^{-s^*} z_p^* \, f_{\psi}( s^*, \vec{z^*}).
  \end{equation}
  This yields an alternative representation of the creation and annihilation operators via operations on the variables $z_p$, namely:
  \begin{equation}\label{z_delz}
    \hat{a}_p \leftrightarrow  p^{s^*} \partial_{z_p^*}\quad \mbox{ and }\quad
  \hat{a}_p^\dagger\leftrightarrow  p^{-s^*} z_p^*.
  \end{equation}
 In this way, the resolution of the identity formula facilitates calculations.  Formulas (\ref{a_p_star_psi}) and (\ref{a_p_psi}), or the equivalent formula (\ref{ap-del}), are analogues of well known formulas for the regular coherent states.
\vspace{.5cm}

\noindent
\emph{Example.} The complex variable representation of states gives strong advantage on some types of calculations, even those including a finite system. Here, we will discuss the simplest case of a dimer, i.e. a state vector representing two particles placed on sites $2$ and $3$. In accordance with (\ref{z_delz}), $\hat{N}_{p} = z_p^* \partial_{z_p^*}$, while the hopping part of (\ref{NT_BH}) assumes the form
\[
  \mathcal{H}_i= - \tau\, \left[\, (2/3)^{-s^*} z_2^* \partial_{z_3^*} + (3/2)^{-s^*} z_3^* \partial_{z_2^*}\,\right].
\]
One can search for solutions in the form of linear combinations of terms $z_2, z_3$ (one-particle states), or  $(z_2^*)^2, \, z_2^*z_3^*, \, (z_3^*)^2$ (two-particle states), etc.

\subsection{Bose-Hubbard Hamiltonian}

The NCS allow us to gain insight into the solutions of Bose-Hubbard Hamiltonian with a finite number, say, $N$, of sites. Let the sites by labeled $p_1 = 2, p_2 = 3, p_3 = 5, \ldots , p_{N}$, so that the hopping part assumes the form $\mathcal{H}_i = -\tau D$, where
\begin{equation}\label{hopper_n}
  D = \sum\limits_{j =1}^{N-1} \, a_j \, z_{p_j}^* \partial_{z_{p_{j+1}}^*} + \, a_j^{-1} z_{p_{j+1}}^* \partial_{z_{p_j}^*}, \quad \mbox{ where } a_j = \left(\frac{p_j}{p_{j+1}}\right)^{-s^*}.
\end{equation}
Crucially, $D$ satisfies the Leibnitz rule. Here we are only interested in this property in the space of polynomials in $z_j$. Thus, for any two polynomial $f, g$, we have
\begin{equation}\label{Leibnitz}
  D \, [fg] = D[f]\, g + f\, D[g].
\end{equation}
Note that the $n$-particle space has a basis that consists of monomials $z_2^{\alpha_2} z_3^{\alpha_3} \ldots z_{p_N}^{\alpha_{p_N}}$, where $\alpha_p$ are nonnegative integers, and their sum is $n$. Observe that each multi-index $\vec{\alpha} = (\alpha_ 2, \,  \alpha_3, \, \ldots ,\, \alpha_{p_N})$, where $|\vec{\alpha}| := \alpha_ 2 + \alpha_3 + \ldots + \alpha_{p_N} = n$, is also a partition of the integer $n$ into $N$ parts. In fact, we will demonstrate that the eigenvalues of $D$ are determined by the partitions. To this end, let us observe that the matrix od $D$ in the $1$-particle space assumes the form
\begin{equation}\label{D_matrix}
 \left(
    \begin{array}{ccccc}
       &  a_1 &  &  &  \\
      a_1^{-1} &  & a_2 &  &  \\
       & a_2^{-1} &  & \ddots &  \\
       &  & \ddots &  & a_{N-1} \\
       &  &  & a_{N-1}^{-1} &  \\
    \end{array}
  \right)
\end{equation}
It is easily seen that the eigenvalues do not depend on the coefficients $a_j$, \emph{a fortiori} not on the value of $s$. Thus, the eigenvalues are equal to those of the matrix where $a_j =1$ for all $j$, and these are well known to be
\begin{equation}\label{eigs1}
  \lambda_k = 2\cos\frac{k \pi}{N+1}, \quad k = 1,2, \ldots , N.
\end{equation}
Note that the set of eigenvalues is invariant with respect to $x \mapsto -x$. In particular, when $N$ is odd, the middle eigenvalue $\lambda_{(N+1)/2} =0$.
Let the corresponding eigenvectors be denoted $f_k$. These depend on the choice of parameter $s$, but their exact form is of no consequence in what follows. The $N$-tuple of $f_k$ are linearly independent, and each $f_k$ can be identified with a linear function of $z_j$. Next, observe that
$|\vec{\alpha}\rangle: = f_2^{\alpha_2} f_3^{\alpha_3} \ldots f_{p_N}^{\alpha_{p_N}}$, where $|\vec{\alpha}| = n$, furnishes a basis of the $n$-particle space. Moreover, in light of (\ref{Leibnitz}), we have
\begin{equation}\label{LeibD}
  D\, |\vec{\alpha}\rangle = \sum_{k =1}^{N} \alpha_{p_k} \lambda_k \,\, |\vec{\alpha}\rangle.
\end{equation}
In summary, the set of eigenvalues of $D$ in the $n$ particle space is
\begin{equation}\label{eigsn}
  \left\{2 \sum_{k =1}^{N} \alpha_{p_k} \cos\frac{k \pi}{N+1}\,:\,\, |\vec{\alpha}| = n \, \right\}.
\end{equation}
Thus, the eigenvalues are given by an exact formula.

The number operator is also given explicitly by
\begin{equation}\label{sumNp}
  \hat{N} = \sum_{j=1}^{N} \hat{N}_{p_j} = \sum\limits_{j =1}^{N}  z_{p_j}^* \partial_{z_{p_{j}}^*}, \quad \mbox{ and, of course, } \quad  \hat{N}\, |\vec{\alpha}\rangle = |\vec{\alpha}|\, |\vec{\alpha}\rangle.
\end{equation}
Thus, he Hamiltonian of the form
$
  \gamma   \sum_{j=1}^{N} \hat{N}_{p_j} - \tau D
$
has eigenvalues
\begin{equation}\label{eigsn2}
  \lambda_{\vec{\alpha}} = \gamma n + 2 \tau \sum_{k =1}^{N} \alpha_{p_k} \cos\frac{k \pi}{N+1}, \quad  \mbox{ where }\,\, |\vec{\alpha}| = n.
\end{equation}
when restricted to the $n$-particle space. Note that $ \lambda_{\vec{\alpha}}$ corresponds to the state $ |\vec{\alpha}\rangle$.

We are interested in a more general Hamiltonian
\begin{equation}\label{quadratic}
  \delta\left(\gamma   \sum_{j=1}^{N} \hat{N}_{p_j} - \tau D\right)^2 + \gamma   \sum_{j=1}^{N} \hat{N}_{p_j} - \tau D,
\end{equation}
whose $n$-particle eigenvalues are precisely
\begin{equation}\label{eigsn3}
  \delta\,\lambda_{\vec{\alpha}}^2 + \lambda_{\vec{\alpha}}, \quad  \mbox{ where }\,\, |\vec{\alpha}| = n,
\end{equation}
each of which corresponds to the state $ |\vec{\alpha}\rangle$. We discuss the physical properties of Hamiltonian (\ref{quadratic}) in the next section.

\subsection{Physical implications}
The model given in Eq. \eqref{quadratic} comprises a few interesting physical implications that are worth noticing in the present context. First, we outline some elementary facts that are helpful for the physical interpretation of the results.

The eigenvectors $f_k$ associated with the matrix $D$ physically represent linear combinations of the states localized at sites $p_k$, $k=1,2,\ldots,N$. Moreover, since the parameters $\tau$ and $\gamma$ do not depend on the site index, it is clear that the eigenvector $f_{p_1}$ corresponding to the lowest eigenvalue $\lambda_1$, represents a maximally delocalized particle configuration. The degree of particle localization captured by $f_{p_k}$ increases with increasing $k$ in the interval $2\leq k< N_U$, where $N_U=N/2$ or $N_U=(N+1)/2$, depending on whether $N$ is even or odd, respectively. By localization in this context we refer to particles in close vicinity to each, either at the same site or at different sites nearby one another. 

Accordingly, multi-particle states also have localization qualities. Indeed, as mentioned above, the product state $|\vec{\alpha}\rangle$, where $|\vec{\alpha}|=n$, represents an $n$-particle state with $\alpha_{p_1}$ particles in $f_{p_1}$, $\alpha_{p_2}$ particles in $f_{p_2}$, and so forth. Therefore, $f_2^nf_3^0\cdots f_{p_N}^0$ describes the three particle configuration in which all particles occupy the maximally delocalized state $f_{p_1}$. With this understood, we are in a position to analyze and interpret the results from Eq. \eqref{quadratic}.

Considering the non-interacting limit, $\delta=0$, the ground state in a set-up with $N$ sites and $n<N$ particles, is represented by the maximally delocalized state, $|\vec{\alpha}_1\rangle = f_{p_1}^n$, in which all particles occupy the same state. Therefore the state is spread out, and there is at most one particle per site. By the linear dependence on $\tau$, the ground state remains the same for all hopping rates. A pertinent example of the spectrum is plotted in Fig. \ref{fig-SpectrumN5n3} (a) as function of $\tau$, where the mode number on the horizontal axis represents the order of $|\vec{\alpha}_k\rangle$, so that $\lambda_{\vec{\alpha}_1} \leq\lambda_{\vec{\alpha}_2} \leq \ldots \leq \lambda_{\vec{\alpha}_{15}}$. Here, we have limited to the plot to the fifteen lowest modes for the sake of highlighting the properties near the ground state and the lowest excited states. The plot in Fig. \ref{fig-SpectrumN5n3} (a) clearly demonstrates that $|\vec{\alpha}_1\rangle$ remains the ground state for all $\tau$ within the plotting range, as it corresponds to the lowest eigenvalue.

\begin{figure}[t]
\begin{center}
\includegraphics[width=\textwidth]{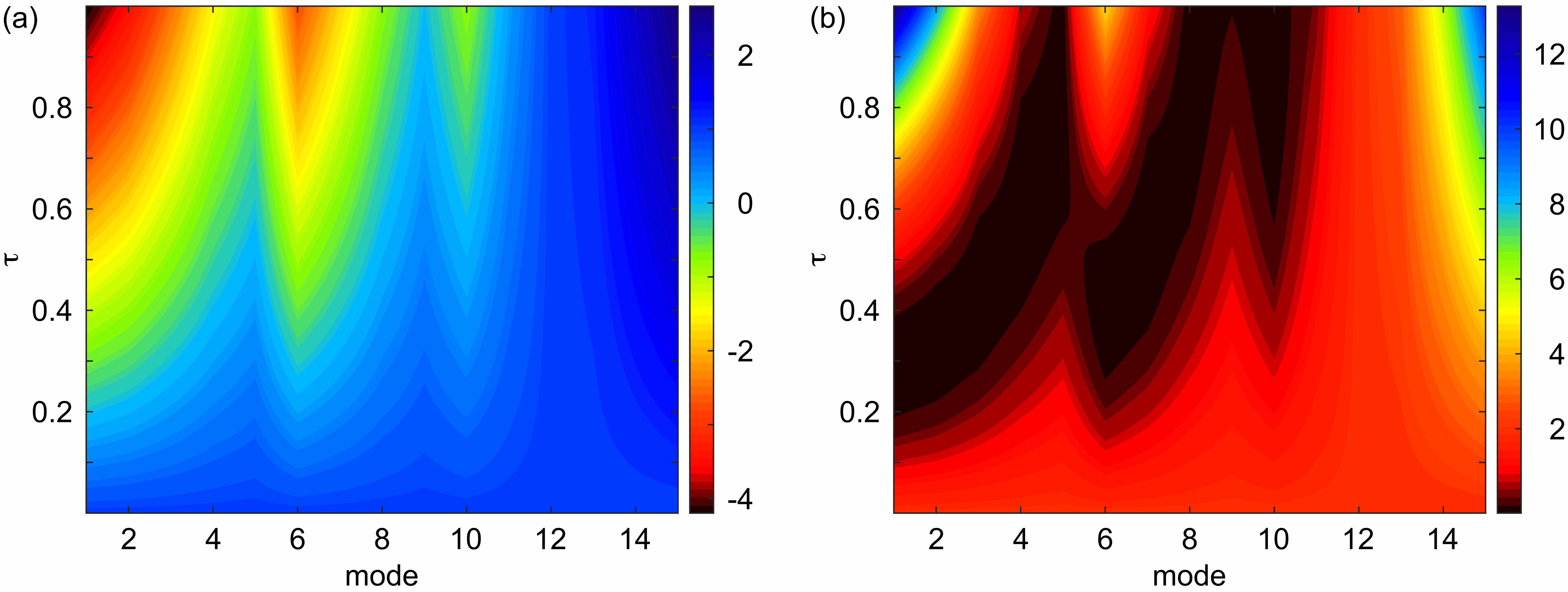}
\end{center}
\caption{Spectra of the model in Eq. \eqref{quadratic} for the modes $|\vec{\alpha}_k\rangle$  with first fifteen lowest energies $\lambda_{\vec{\alpha}_1} \leq\lambda_{\vec{\alpha}_2} \leq \ldots \leq \lambda_{\vec{\alpha}_{15}}$, for (a) $\delta=0$ and (b) $\delta=1$, as function of the hopping rate $\tau$. Here, $N=5$, $n=3$, and $\gamma=1$.}
\label{fig-SpectrumN5n3}
\end{figure}

By contrast, the ground state in the interacting model does not remain the same for all $\tau$. This can be seen in Fig. \ref{fig-SpectrumN5n3} (b), in which the spectrum is recalculated with $\delta=1$. The eigenvalues of all modes between $|\vec\alpha_1\rangle$ and $|\vec\alpha_{11}\rangle$ are strikingly non-monotonic. Moreover, for a range of hopping rates $0.4<\tau<1$, the maximally delocalized state, $|\vec\alpha_1\rangle$, is not the ground state; instead, the ground state is formed by two particles with the energy $\lambda_1$ and one with another energy, e.g., $\lambda_2$, $\lambda_3$, or $\lambda_4$. The interacting system, hence, undergoes configurational transitions for a range of hopping rates $\tau$ which may have correspondences as phase transitions in the thermodynamical limit, $N\rightarrow\infty$. This is, nevertheless, beyond the scope of the present article.

Despite the repulsive overall character of the interaction, the transition of the ground state from the maximally delocalized state to configurations which are less delocalized is conspicuous. The transition between the states can be understood as the effect of an anisotropic interaction which drives the solution away from the maximally delocalized state for a range of hopping rates. This may be compared to anisotropy reflected by the Ising model in comparison with the Heisenberg model for spin interactions. While the latter is purely isotropic, it provides a reasonable description of either ferromagnetism or anti-ferromagnetism, however, without any preferred spatial direction. The anisotropic contribution from the Ising model introduces such preference which thereby also restricts the solution space of the combined models.


\section*{Acknowledgements} The authors acknowledge partial support via the Global Ambassador Program, University of Saskatchewan 2022-23 and 2023-24.


\begin{thebibliography}{[90]}

\bibitem{Antonsen} F. Antonsen,  \emph{Coherent States on Lie Algebras: A Constructive Approach.} {\sc International Journal of Theoretical Physics} \textbf{38} (1999), 675--700. \url{https://doi.org/10.1023/A:1026611424889}



\bibitem{Bost_Connes} J.-B. Bost and A. Connes, Hecke Algebras, Type III Factors and Phase Transitions with Spontaneous Symmetry Breaking in Number Theory, {\sc Selecta Mathematica} New Series 1 (1995) 411-457.

\bibitem{Combescure-Robert} M. Combescure, D. Robert, Semiclassical sum rules and generalized coherent states, {\sc J. Math. Phys.} \textbf{36} (1995), 6596--6610.



\bibitem{Elliott} P.D.T.A. Elliott, Duality in Analytic Number Theory, Cambridge University Press 1997.

\bibitem{Folland} G. B. Folland, A Course in Abstract Harmonic Analysis, CRC Press 2015

\bibitem{Froberg} C.E. Fr\"{o}berg, \emph{On the prime zeta function}, {\sc BIT} \textbf{8} (1968), 187--202.

\bibitem{Gazeau} JP. Gazeau \emph{$SL(2,\mathbb{R})$-Coherent States and Integrable Systems in Classical and Quantum Physics}. In: Antoine, JP., Ali, S.T., Lisiecki, W., Mladenov, I.M., Odzijewicz, A. (eds) Quantization, Coherent States, and Complex Structures. Springer, Boston, MA. \url{https://doi.org/10.1007/978-1-4899-1060-8_17}

\bibitem{Halmos} P.R. Halmos, A Hilbert Space Problem Book,  Second Edition, Springer 1982

\bibitem{Holstein-Primakoff} T. Holstein and H. Primakoff, \href{https://doi.org/10.1103/physrev.58.1098}{Field Dependence of the Intrinsic Domain Magnetization of a Ferromagnet, {\sc Phys. Rev.} \textbf{58} (1940), 1098--1113.}

\bibitem{Kato} T. Kato, Perturbation Theory of Linear Oprators, Springer 1980

\bibitem{Kastrup} H. A. Kastrup, \emph{Quantization of the Optical Phase Space $S^2 = \{\phi \mbox{ mod } 2\pi, I >0  \}$ in Terms of the Group $\mbox{SO}^{\uparrow}(1,2)$}, {\sc Fortschr. Phys.} \textbf{51} (2003), 975--1134

\bibitem{Klauder} J. R. Klauder, B.-S. Skagerstam, Coherent States, Applications in Physics and Mathematical Physics, World Scientific 1985

\bibitem{Merkli} M. Merkli, The Ideal Quantum Gas. In: Attal, S., Joye, A., Pillet, CA. (eds) Open Quantum Systems I. Lecture Notes in Mathematics, vol 1880. Springer, Berlin, Heidelberg  (2006). \url{https://doi.org/10.1007/3-540-33922-1_5}

\bibitem{Rudin} W. Rudin, Fourier Analysis on Groups, Dover 2017 (originally published by Interscience Publishers 1967)

\bibitem{Sowa_Fransson} A. Sowa, J. Fransson, \emph{Solving the Bose-Hubbard model in new ways}, {\sc Quantum} \textbf{6} (2022) 728. \url{https://quantum-journal.org/papers/q-2022-06-02-728/}

\bibitem{Spector} D. Spector, \emph{Supersymmetry and the M\"{o}bius Inversion Function}, {\sc Comm. Math. Phys.} \textbf{127} (1990), 239--252


\bibitem{Zurek} W. H. Zurek, S. Habib, and J.-P. Paz, \emph{Coherent states via decoherence}, {\sc Phys. Rev. Lett.} \textbf{70} (1993), 1187--1190.



\end{thebibliography}
\end{document}